\numberwithin{equation}{section}
\theoremstyle{plain}
\newtheorem{theorem}{Theorem}
\newtheorem{assumption}{Assumption}
\newtheorem{corollary}[theorem]{Corollary}
\newtheorem{lemma}{Lemma}
\newcommand{\nc}{\newcommand}
\nc{\norm}{\mathcal{N}}
\nc{\E}{\mathbb{E}}
\nc{\bx}{{\bf x}}
\nc{\bX}{{\bf X}}
\nc{\by}{{\bf y}}
\nc{\IG}{\mathcal{IG}}
\nc{\dd}[2]{\frac{\partial #1}{\partial #2}}
\nc{\lhat}[1][i]{\hat\lambda_{#1}^{-1(g)}}
\nc{\what}[1][j]{\hat\omega_{#1}^{-1(g)}}
\nc{\bone}{{\bf 1}}
\nc{\Li}{\hat\Lambda^{-1(g)}}
\nc{\Oi}{\hat\Omega^{-1(g)}}
\nc{\dps}{\displaystyle}
\nc{\tr}{\text{tr}}
\DeclareMathOperator*{\argmax}{arg\,max}
\nc{\indep}{\mbox{$\perp\!\!\!\perp$}}
\begin{document}

\begin{frontmatter}
\title{Efficient nonparametric estimation of causal mediation effects}
\runtitle{Nonparametric estimation of causal mediation effects}

\begin{aug}
\author{\fnms{K.C.G. Chan}\thanksref{t1}\ead[label=e1]{kcgchan@uw.edu}},
\author{\fnms{K. Imai}\thanksref{t2}\ead[label=e2]{kimai@princeton.edu}},
\author{\fnms{S.C.P. Yam}\thanksref{t3}\ead[label=e3]{scpyam@sta.cuhk.edu.hk}},
\and
\author{\fnms{Z. Zhang}\thanksref{t4}
\ead[label=e4]{zzhang1989@gmail.com}
}

\thankstext{T1}{K.C.G. Chan thanks the United States National Institutes of Health for support (R01 HL122212, R01 AI121259) ; Kosuke Imai thanks the United States National Science
  Foundation for support (SES--0918968); {\color{black}Phillip} Yam acknowledges the
  financial support from the Hong Kong RGC {\color{black}GRF 14301015}, and Direct Grant for Research 2014/15 with project
  code: 4053141 offered by CUHK; Zheng Zhang acknowledges the
  financial support from the Chinese University of Hong Kong and the University of Hong Kong; the present work constitutes part of his research study leading to his Ph.D thesis in the Chinese University of Hong Kong.}  \runauthor{K.C.G. Chan, K. Imai, S.C.P. Yam, and
  Z. Zhang}

\affiliation{University of Washington\thanksmark{t1}, Princeton University\thanksmark{t2},\\ The Chinese University of Hong Kong\thanksmark{t3}, The University of Hong Kong\thanksmark{t4}}

\address{K.C.G. Chan\\
Department of Biostatistics\\
University of Washington\\
\printead{e1}}

\address{K. Imai\\
Department of Politics\\
Center for Statistics and Machine Learning\\
Princeton University\\
\printead{e2}}

\address{S.C.P. Yam\\
Department of Statistics\\
The Chinese University of Hong Kong\\
\printead{e3}
}
\address{Z. Zhang\\
Department of Statistics and Actuarial Science\\
The University of Hong Kong\\
\printead{e4}
}
\end{aug}

\begin{abstract}
  An essential goal of program evaluation and scientific research
  is the investigation of causal mechanisms. Over the past several
  decades, causal mediation analysis has been used in medical and
  social sciences to decompose the treatment effect into the natural
  direct and indirect effects. However, all of the existing mediation
  analysis methods rely on parametric modeling assumptions in one way
  or another, typically requiring researchers to specify multiple
  regression models involving the treatment, mediator, outcome, and
  pre-treatment confounders. To overcome this limitation, we propose a
  novel nonparametric estimation method for causal mediation analysis
  that eliminates the need for applied researchers to model multiple
  conditional distributions. The proposed method balances a certain
  set of empirical moments between the treatment and control groups by
  weighting each observation; {\color{black}in particular, we establish} that the proposed estimator is
  {\it globally} semiparametric efficient. We also show how to
  consistently estimate the asymptotic variance of the proposed
  estimator without additional efforts. Finally, we extend the
  proposed method to other relevant settings including the causal
  mediation analysis with multiple mediators. 
\end{abstract}

\begin{keyword}[class=MSC]
\kwd[Primary ]{60K35}
\kwd{60K35}
\kwd[; secondary ]{60K35}
\end{keyword}

\begin{keyword}
\kwd{Exponential tilting, Natural direct effects, Natural indirect effects, Treatment effects, Semiparametric efficiency }
\end{keyword}

\end{frontmatter}

\section{Introduction}

In program evaluation and scientific research, an essential goal is to
understand why and how a treatment variable influences the outcomes of
interest, going beyond the estimation of the average treatment
effects.  In this regard, causal mediation analysis plays an important
role in the investigation of causal mechanisms by decomposing the
treatment effect into the natural direct and indirect effects
\citep{robins1992identifiability,pearl2001direct,robi:03}.  Such an
approach has been widely used in a number of disciplines in medical
and social sciences \citep[see e.g.,][]{baron1986moderator,imai:etal:11,mackinnon2008introduction,vand:15}.
The methodological literature on causal mediation analysis has also
rapidly grown over the last decade and produced numerous approaches
and extensions \citep[see, {\color{black}for example},][]{albe:08,gene:07,imai:keel:yama:10,jo:08,joffe2007defining,sobe:08,ten2007causal,tchetgen2012semiparametric,vanderweele2009marginal,vanderweele2010odds}.

In this {\color{black}article}, we {\color{black} here} contribute to this fast growing literature by
developing a new efficient nonparametric estimation method for causal
mediation analysis.  All of the existing mediation analysis methods
rely on parametric modeling assumptions in one way or another,
typically requiring researchers to specify multiple regression models
involving the treatment $T$, mediator $M$, outcome {\color{black} $Y$,} and
pre-treatment confounders $X$.  For example, the standard approach
based on the so {\color{black} called ``mediation formula''} require the
specification of two or three conditional distributions, {\color{black}i.e.}
$f_{Y \mid M,T,X}$, $f_{M \mid T, X}$, and possibly $f_{T \mid X}$
\citep[{\color{black}for example},][]{imai:keel:ting:10,pear:12,vanderweele2009marginal}.
Inference under this standard approach is only valid when both the
outcome model $f_{Y \mid M, T, X}$ and the mediator model
$f_{M \mid T, X}$ are correctly specified.  Our proposed method
eliminates the need for applied researchers to model these multiple
conditional distributions {\color{black}a priori}.

We {\color{black}are inspired by} the recent work of \cite{tchetgen2012semiparametric} who
develop a robust semiparametric estimation procedure to allow for
possible model misspecification.  The authors show that their proposed
estimator is consistent when any two out of three chosen models are
correctly specified and is locally semiparametric efficient {\color{black}whenever} all
three models are correct.  While this estimator represents an
important advance in the literature, its validity still relies upon
the correct specification of multiple parametric or semiparametric
models.  We improve this estimator by proposing a {\it globally}
semiparametric efficient estimator that attains the semiparametric
efficiency bound, derived by \cite{tchetgen2012semiparametric},
without imposing the additional {\color{black}structural} assumptions required for the existing
semiparametric estimator.  To the best of our knowledge, no globally
semiparametric efficient estimator has been proposed in the causal
mediation literature.

{\color{black} Our} proposed estimator is based on a strategy of balancing covariates
by weighting each observation, which has recently become popular when
estimating the average treatment effects
\citep[{\color{black}for example},][]{chan2015lobally,hain:12,grah:pint:egel:12,imai:ratk:14}.
We combine this idea with the construction of globally semiparametric
efficient estimation of the average treatment effects
\citep[{\color{black}see},][]{chen2008semiprametric,hahn1998role,hirano2000efficient,imbens2005mean}.
Unlike these plugin-type globally semiparametric efficient estimators
that require {\color{black}semi-parametric} estimation of the propensity score or the
outcome regression function, we adopt the nonparametric calibration approach
developed by \cite{chan2015lobally} that {\color{black}constructs}
observation-specific weights only from covariate balancing conditions.
This is a significant advantage in causal mediation analysis because
the plugin-type globally semiparametric estimators would require the
{\color{black}semi-parametric} estimation of three conditional distributions, {\color{black} which is} a
difficult task in practice {\color{black} yielding more doubt on the robustness of the estimators}.

The rest of the paper is organized as follows.  In
Section~\ref{sec:estimation}, we describe the proposed estimation
method, which matches the certain moment conditions of the mediator
and pre-treatment covariates between the treatment and control groups.
We then show how to consistently estimate the asymptotic variance of
the proposed estimator without additional functional estimation.  In
Section~\ref{sec:extensions}, we extend our method to the case of
multiple mediators studied in \citet{imai:yama:13}.  We then discuss
two related estimation problems, namely the estimation of pure
indirect effects and natural direct effect of the untreated.  Finally,
we apply the proposed methods to two data sets in
Section~\ref{sec:dataanalysis} and offer concluding remarks in
Section~\ref{sec:conclusion}.

\section{The Proposed Methodology}
\label{sec:estimation}

In this section, we first consider the efficient nonparametric
estimation of the average natural direct and indirect effects.  In
Theorem~\ref{theorem:main}, we {\color{black}shall} show that the proposed nonparametric
estimator is consistent, asymptotically normal, and globally
semiparametric efficient.  We then demonstrate how to
nonparametrically estimate the asymptotic variance of the proposed
estimator.

\subsection{The framework}

Suppose that we have a binary treatment variable $T \in\{0,1\}$.
Under the standard framework of causal inference, we let $M(t)$ denote
a potential mediating variable, which represents the value of the
mediator if the treatment variable is equal to $t \in \{0,1\}$.
Similarly, let $Y(t,m)$ represent the potential outcome variable under
the scenario where the treatment and mediator variables take the value
$t$ and $m$, respectively.  Then, the observed mediator $M$ is given
by $M = T M(1)+(1-T) M(0)$ whereas the observed outcome is equal to
$Y=T Y(1,M(1))+(1-T) Y(0,M(0))$.  We assume that we have a simple
random sample of size $N$ from a population and therefore observe the
{\it i.i.d.}  realizations of these random variables,
$\{T_i,M_i,Y_i,X_i\}_{i=1}^N$ where $X$ is a vector of pretreatment
covariates.

A primary goal of causal mediation analysis is the following
decomposition of the average treatment effect into the average natural
indirect effect (or average causal mediation effect) and the average
natural direct effect
\citep{robins1992identifiability,pearl2001direct,robi:03}
\begin{eqnarray}
 & & \E[Y(1,M(1))-Y(0,M(0))] \notag \\
& = & \E[Y(1,M(1))-Y(1,M(0))]  + \E[Y(1,M(0))-Y(0,M(0))] \label{eq:ACME}
\end{eqnarray}
The average natural {\color{black}indirect} effect, which is the first term in this
equation, is the average difference between the potential outcome
under the treatment condition and the counterfactual outcome under the
treatment condition where the mediator is equal to the value that
would {\color{black}have realized} under the control condition.  This quantity represents
the average difference that would result if the mediator value changes
from $M(0)$ to $M(1)$ while holding the treatment variable constant at
$T = 1$.  In contrast, the average natural direct effect, which is the
second term in equation~\eqref{eq:ACME}, represents the average
treatment effect when the mediator is held constant at $M(0)$.
Therefore, this decomposition enables researchers to explore how much
of the treatment effect is due to the change in the mediator.  

Note that the following alternative decomposition for causal mediation
analysis is also possible,
\begin{eqnarray}
 & &  \E[Y(1,M(1))-Y(0,M(0))] \notag \\
 & = & \E[Y(0,M(1))-Y(0,M(0))]  + \E[Y(1,M(1))-Y(0,M(1))] \label{eq:ACME2}
\end{eqnarray}
where the treatment variable is held constant at $T=0$ for the natural
indirect effect and the mediator is fixed at {\color{black}$M(1)$} for the natural
direct effect.  \cite{robi:03} called this version of the natural
indirect effect as the {\it pure indirect effect} while referring the
natural indirect effect given in equation~\eqref{eq:ACME} as the {\it
  total indirect effect} {\color{black}since it is resulted from both treatment and mediator}.  {\color{black}Our} proposed estimator is applicable to
both cases as the difference between the two decompositions solely
depends on the value at which the treatment is fixed.

To nonparametrically identify the average natural direct and indirect
effects, we rely on the following set of assumptions as in
\cite{imai:keel:ting:10} and \cite{tchetgen2012semiparametric}.
\begin{assumption}  \label{assump:SI} \
\begin{enumerate}
\item (Consistency) If $T=t$, then $M=M(t)$ with probability 1 for
  $t \in \{0, 1\}$.  If $T=t$ and $M=m$, then $Y=Y(t,m)$ with
  probability 1 for $t \in \{0, 1\}$ and $m \in \mathcal{M}$, where
  $\mathcal{M}$ is the support of the distribution of $M$.

\item (Sequential Ignorability) Given $X$, $\{Y(t^\prime,m),M(t)\}$ is
  independent of $T$ for $t,t^\prime \in \{0, 1\}$.  Also, given $T=t$
  and $X$, $Y(t^\prime ,m)$ is independent of $M(t)$ for
  $t, t^\prime \in \{0, 1\}$ and $m \in \mathcal{M}$.

\item (Positivity) With probability 1 with respect to any $(t, x)$
  where $t \in \{0, 1\}$ and $x \in \mathcal{X}$,
  $f_{M\mid T,X}(m\mid t,x)>0$ for all $m\in \mathcal{M}$ where
  $\mathcal{X}$ is the support of $X$. With probability 1 with respect
  to any $x \in \mathcal{X}$, $f_{T\mid X}(t\mid x)>0$ for all
  $t \in \{0,1\}$.
\end{enumerate}
\end{assumption}

The sequential ignorability assumption is a natural extension of the
unconfoundedness assumption for the identification of the average
treatment effect except that it requires the ``cross-world''
independence between $Y(t^\prime, m)$ and $M(t)$ \citep[\color{black}see,][]{rich:robi:13}.
Several researchers have proposed different sensitivity analysis
techniques for estimating the bias that arises when this assumption is
violated \citep[\color{black} see,][]{imai:keel:yama:10,vand:10,tchetgen2012semiparametric}.
Under Assumption~\ref{assump:SI}, \citet{imai:keel:yama:10} showed
that the average natural direct and indirect effects are
nonparametrically identified.  That is,
\begin{eqnarray*}
\theta_t & =  & \E(Y(1-t,M(t))) \\
& = & \int\int\E(Y\mid T=1-t,M=m,X=x)f_{M\mid T,X}(m\mid T=t,X=x)f_X(x)dxdm
\end{eqnarray*}
and
\begin{eqnarray*}
\delta_t \ = \ \E(Y(t,M(t))) & = & \int \E(Y\mid T=t,X=x)f_X(x)dx
\end{eqnarray*}
for $t=0,1$.  

\cite{tchetgen2012semiparametric} made an important theoretical
advance {\color{black} by showing} that under Assumption~\ref{assump:SI} the efficient
influence function of $\theta_t$ is given by,
\begin{eqnarray}\label{E:sefftheta0}
S_{\theta_t} & = &\frac{\mathbf{1}\{T=1-t\}f_{M\mid T,X}(M\mid
                     T=t,X)}{f_{T\mid X}(1-t\mid X)f_{M\mid T,X}(M\mid
                     T=1-t,X)}\{Y-\E(Y\mid X,M,T=1-t)\}\nonumber \\
               &&+\frac{\mathbf{1}\{T=t\}}{f_{T\mid X}(t\mid
                  X)}\{\E(Y\mid X,M,T=1-t)-\eta(1-t,t,X)\}+\eta(1-t,t,X)-\theta_t \notag\\
\end{eqnarray}
where
\begin{eqnarray*}
  \eta(t,t^\prime,X) & = & \int \E(Y\mid X,M=m,T=t)f_{M\mid T,X}(m\mid T=t^\prime,X)dm 
\end{eqnarray*}
for $t,t^\prime \in \{0,1\}$. {\color{black}Hence, the} definition of $\eta$ implies that
$\eta(1,1,X)=\E(Y\mid X,T=1)$ and $\eta(0,0,X)=\E(Y\mid X,T=0)$.

Furthermore, the efficient influence functions of the average natural
direct effect when $t=0$, i.e., $\textsf{NDE}=\theta_0-\delta_0$, and
the average natural indirect effect when $t=1$ (or the average total
indirect effect), i.e., $\textsf{NIE}=\delta_1-\theta_0$, are
$S_{\textsf{NDE}}=S_{\theta_0}-S_{\delta_0}$ and
$S_{\textsf{NIE}}=S_{\delta_1}-S_{\theta_0}$, respectively, where
$S_{\delta_1}$ and $S_{\delta_0}$ are the efficient influence
functions for estimating $\delta_1$ and $\delta_0$.  As shown in
\cite{robi:rotn:zhao:94} and \cite{hahn1998role}, these efficient
influence functions are given by,
\begin{equation}\label{E:seffdelta}
S_{\delta_t} \ = \ \frac{\mathbf{1}\{T=t\}}{f_{T\mid X}(t\mid
  X)}\{Y-\E(Y\mid X,T=t)\}+\E(Y\mid X,T=t)-\delta_t\ ,
\end{equation}
for $t=0,1$.  Similarly, the average natural indirect effect when
$t=0$ (or the pure indirect effect) is given by
$\textsf{PIE} = \theta_1-\delta_0$ and its efficient function is equal
to $S_{\theta_1}-S_{\delta_0}$. Therefore, the efficient {\color{black}estimations} of
the natural direct and indirect effects {\color{black}involve} the efficient
estimation of $\delta_t$ and $\theta_t$ for $t=0,1$.

\subsection{Efficient estimation of $\delta_1$ and $\delta_0$}

Before proposing a globally semiparametric efficient estimator of
$\theta_0$, which is {\color{black}one of the} main {\color{black}contributions} of {\color{black}our} paper, we discuss the
efficient estimation of $\delta_1$ and $\delta_0$, which is required
for the efficient estimation of the natural direct and indirect
effects.  There exists an extensive literature on the globally
efficient estimation of $\delta_0$ and $\delta_1$ in econometrics
\citep[see, {\color{black}for example},][]{chen2008semiprametric,hahn1998role,hirano2000efficient,imbens2005mean}.
However, many of these existing estimators require the {\color{black}semi-parametric}
estimation of propensity score or outcome regression model.  In this
paper, we focus on a globally efficient estimator recently proposed by
\cite{chan2015lobally}, which serves as a building block of our
proposed estimator of $\theta_0$ discussed below.  Unlike the other
estimators, this approach achieves the efficient nonparametric
estimation by balancing covariates through weighting.

Let $p_0(x) \triangleq \frac{1}{N}f_{T\mid X}(1\mid x)^{-1}$ and
$q_0(x) \triangleq \frac{1}{N}f_{T\mid X}(0\mid x)^{-1}$. Under
Assumption~\ref{assump:SI}, for any suitable integrable functions {\color{black}$u(x)$},
the following important moment conditions hold,
\begin{eqnarray}
\delta_1 & = & \E\left(\sum_{i=1}^N T_ip_0(X_i) Y_i\right) \\
\delta_0 & = & \E\left(\sum_{i=1}^N (1-T_i)q_0(X_i) Y_i \right) \\
\E(u(X)) & = & \E\left(\sum_{i=1}^N T_ip_0(X_i) u(X_i)\right) \label{E:bal1}\\
\E(u(X)) & = & \E\left(\sum_{i=1}^N (1-T_i)q_0(X_i) u(X_i)\right) \label{E:bal0} 
\end{eqnarray}
The first two equalities represent the inverse-probability-weighting
(IPW) estimators of the average potential outcomes.  A number of
scholars have exploited the covariate balance conditions in
equations~\eqref{E:bal1}~and~\eqref{E:bal0} in order to estimate the
average treatment effects
\cite[e.g.,][]{chan2014oracle,han2013estimation,imai:ratk:14,grah:pint:egel:12,qin2007empirical}.
These existing estimators are locally semiparametric efficient, {\color{black}yet all of these}
rely on parametric models in one way or another.

Our goal is, however, to develop a globally {\color{black}fully nonparametric} efficient
estimator.  Thus, we utilize the nonparametric estimator proposed by
\cite{chan2015lobally}.  Let $D(v,v^\prime)$ be a distance measure for
{\color{black}$v,v^\prime \in\mathbb{R}$}. That is, we assume {\color{black}that} $D(v, v^\prime)$
is continuously differentiable in $v\in \mathbb{R}$, non-negative, and
strictly convex in $v$ with $D(v,v)=0$.  Based on
equations~\eqref{E:bal1}~and~\eqref{E:bal0}, \cite{chan2015lobally}
{\color{black}constructs} calibration weights by solving the following minimization
problem subject to constraints that are empirical counterparts of
equations~\eqref{E:bal1}~and~\eqref{E:bal0}:
\begin{eqnarray}\label{E:cm1}
\textsf{Minimize} & &  \sum_{i=1}^N T_i D(N p_i,1) \notag \\
 & &  \textsf{subject
  to}~~\sum_{i=1}^N T_i p_i u_K(X_i) \ = \ \frac{1}{N}\sum_{i=1}^N u_K(X_i),
\end{eqnarray}
and
\begin{eqnarray}\label{E:cm2}
\textsf{Minimize} & &  \sum_{i=1}^N (1-T_i) D(Nq_i,1)\notag \\
                  & & \textsf{subject
                      to}~~\sum_{i=1}^N (1-T_i)q_iu_K(X_i) \ = \ \frac{1}{N}\sum_{i=1}^Nu_K(X_i),
\end{eqnarray}
where $u_K$ is a $K(N)$-dimensional function of $X$, {\color{black}whose components form a set of orthonormal polynomials, here} $K(N)$ increases to infinity when $N$ goes to infinity yet with $K(N)=o(N)$. {\color{black}Furthermore, note that all these $u_{K}$'s have to form a basis on $L^{\infty}$ as $K$ goes to infinity}.

Furthermore, to gain computational efficiency for implementation, they
consider the dual problems of
equations~\eqref{E:cm1}~and~\eqref{E:cm2}.  While the primal problems
given in equations~\eqref{E:cm1}~and~\eqref{E:cm2} are convex
separable programming with linear constraints,
\cite{tseng1987relaxation} showed that the dual problems are
unconstrained convex maximization problems, which can be solved by
efficient and stable numerical algorithms.  With slight abuse of
notation, denote $D(v)=D(v,1)$.  For observations with $T_i = 1$, the
dual solution is given {\color{black}by},
\begin{align} \label{eq:calp}
\hat{p}_K(X_i) \ \triangleq \ \frac{1}{N}
  \rho'\left(\hat{\phi}_K^Tu_K(X_i)\right),
\end{align}
where $\rho'$ is the first derivative of the following strictly
concave function,
\begin{equation}\label{E:dual_rel}
  \rho(v) \ = \ D\left((D')^{-1}(-v)\right)+v\cdot (D')^{-1}(-v),
\end{equation}
and $\hat{\phi}_{K} \in \mathbb{R}^K$ maximizes the following
objective function,
\begin{equation}\label{E:hatF}
  \hat{F}_{K}(\phi) \ \triangleq \ \frac{1}{N}\sum_{i=1}^N
  \left\{T_i\rho\left({\phi}^\top u_K(X_i)\right)-
    \phi^\top {u}_K(X_i)\right\}. 
\end{equation}
Similarly, for observations with $T_i=0$,
\begin{align} \label{eq:calq}
\hat{q}_K(X_i) \ \triangleq \ \frac {1}{N} \rho'\left(\hat{\lambda}_K^Tu_K(X_i)\right),
\end{align}
where $\hat{\lambda}_{K} \in \mathbb{R}^K$ maximizes the following
objective function,
\begin{equation}\label{E:hatG}
  \widehat{G}_{K}(\lambda) \ \triangleq \ \frac{1}{N}\sum_{i=1}^N \left\{(1-T_i)\rho\left({\lambda}^{T}u_K(X_i)\right) - \lambda^{T}{u}_K(X_i)\right\}\ .
\end{equation}

According to the first order conditions for the maximizations given in
equations~\eqref{E:hatF}~and~\eqref{E:hatG}, one can easily verify
that the linear constraints in
equations~\eqref{E:cm1}~and~\eqref{E:cm2} are satisfied.  Finally,
\cite{chan2015lobally} proposed the following empirical covariate
balancing estimator for $\delta_1$ and $\delta_0$,
\begin{equation}
  \hat{\delta}_{1K} \ \triangleq \ \sum_{i=1}^N T_i\hat{p}_K(X_i)Y_i
  \quad \text{and} \quad \hat{\delta}_{0K} \ \triangleq \
  \sum_{i=1}^N(1-T_i)\hat{q}_K(X_i)Y_i
\end{equation}
The authors showed that $\hat{\delta}_{1K}$ and $\hat{\delta}_{0K}$
attain the semiparametric efficiency bounds given in
equation~\eqref{E:seffdelta} under mild regularity conditions.

\subsection{Efficient estimation of $\theta_0$ and $\theta_1$}

We begin by considering the efficient estimation of $\theta_0$.  As
explained below, the same approach can be applied to efficiently
estimate $\theta_1$.  The efficient influence function of $\theta_0$
given in equation~\eqref{E:sefftheta0}, involves three sets of
nonparametric functions: $f_{T\mid X}(1\mid X)$,
$f_{M\mid T,X}(M\mid T=t,X)$ for $t=0,1$, and $\E(Y\mid X,M,T=1)$.
While it is possible to construct a globally efficient estimator of
$\theta_0$ by plugging the corresponding nonparametric estimates into
equation~\eqref{E:sefftheta0}, the performance of the resulting
estimator may be poor because it is difficult to estimate the
conditional density of a possibly continuous mediator and
$f_{M\mid T,X}(M\mid 1,X)$ appears in the denominator of the first
term of $S_{\theta_0}$.  The direct nonparametric estimation of
$f_{M\mid T,X}$ usually results in extreme weights and the
corresponding weighting estimator can become unstable.

Our goal is to construct a weighting estimator for $\theta_0$.  Let us
represent $\theta_0$ as a weighted average of $Y$ among the treated,
\begin{eqnarray}
  \theta_0 & = & \mathbb{E}\left[\frac{TY}{f_{T\mid X}(1\mid X)}\cdot
                 \frac{f_{M\mid T,X}(M\mid 0,X)}{f_{M\mid T,X}(M\mid 1,X)}\right]\ .
\end{eqnarray}
Furthermore, define,
\begin{align}\label{def:r_0}
r_0(m,x)\ \triangleq \ \frac{f_{M\mid T,X}(m\mid 0,x)}{Nf_{T\mid
  X}(1\mid x)f_{M\mid T,X}(m\mid 1,x)}.
\end{align}
If $r_0(m,x)$ is a known function, then a natural estimator for
$\theta_0$ is $\tilde{\theta}_0=\sum_{i=1}^N T_ir_0(M_i,X_i)Y_i$,
which converges to $\theta_0$ by the Law of Large Number.  Since
$r_0(m,x)$ is {\color{black}mostly} unknown, we {\color{black}shall} replace it by an estimated weight.  To
construct moment conditions for estimating $r_0(m,x)$, we need to
develop a covariate balancing property extending
equations~\eqref{E:bal1}~and~\eqref{E:bal0}. This result is given {\color{black}in}
the following lemma.

\begin{lemma}\label{lemma1}
  With $q_0(x)=(Nf_{T\mid X}(0\mid x))^{-1}$ and $r_0(x,m)$ defined in
  equation~\eqref{def:r_0}, we have
 $$\mathbb{E}[T r_0(X,M)v(X,M)]\ = \ \mathbb{E}[(1-T)q_0(X)v(X,M)]$$ 
 for any suitable integrable functions $ v.$
\end{lemma}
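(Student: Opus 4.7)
The plan is to prove the identity by a direct computation: I would write both sides as integrals against the joint density of $(T,X,M)$, substitute the explicit formulas for $r_0$ and $q_0$, and observe that the resulting integrands coincide. Since $T$ is binary, the indicator $T$ on the left collapses the outer sum/integral to the $T=1$ stratum and $(1-T)$ on the right collapses it to the $T=0$ stratum, so the proof reduces to checking that these two reweighted integrals agree.

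Concretely, first I would factor the joint density via $f_{T,X,M}(t,x,m)=f_{M\mid T,X}(m\mid t,x)\,f_{T\mid X}(t\mid x)\,f_X(x)$. Then the left-hand side becomes
\begin{equation*}
\mathbb{E}[T\,r_0(X,M)\,v(X,M)] \;=\; \int r_0(x,m)\,v(x,m)\,f_{M\mid T,X}(m\mid 1,x)\,f_{T\mid X}(1\mid x)\,f_X(x)\,dx\,dm.
\end{equation*}
Plugging in the definition of $r_0$ causes $f_{T\mid X}(1\mid x)$ and $f_{M\mid T,X}(m\mid 1,x)$ to cancel against the corresponding factors in the denominator, leaving
\begin{equation*}
\frac{1}{N}\int v(x,m)\,f_{M\mid T,X}(m\mid 0,x)\,f_X(x)\,dx\,dm.
\end{equation*}
Next I would perform the analogous computation for the right-hand side: the indicator $(1-T)$ selects $t=0$, so substituting $q_0(x)=(Nf_{T\mid X}(0\mid x))^{-1}$ cancels $f_{T\mid X}(0\mid x)$ and yields exactly the same integral. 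Hence the two expectations coincide for every integrable $v$.

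There is no substantive obstacle here; the argument is essentially bookkeeping. The only subtlety is to confirm that all the divisions are legitimate, which is guaranteed by the Positivity part of Assumption~\ref{assump:SI}: on the support of $X$ we have $f_{T\mid X}(t\mid x)>0$ for both $t\in\{0,1\}$, and on the support of $(T,X)$ we have $f_{M\mid T,X}(m\mid t,x)>0$ for all $m\in\mathcal{M}$, so $r_0$ and $q_0$ are well defined almost surely. The ``suitable integrable'' qualifier on $v$ is what licenses Fubini in the density-factorization step.
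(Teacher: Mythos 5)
Your proposal is correct and follows essentially the same route as the paper's proof: a direct computation using the factorization $f_{T,X,M}(t,x,m)=f_{M\mid T,X}(m\mid t,x)f_{T\mid X}(t\mid x)f_X(x)$, with both sides reducing to the common integral $\frac{1}{N}\int v(x,m)f_{M\mid T,X}(m\mid 0,x)f_X(x)\,dx\,dm$ (the paper merely chains from the left-hand side through this same intermediate expression to the right-hand side, conditioning on $(X,M)$ first). Your explicit appeal to the positivity assumption to justify the divisions is a sensible addition that the paper leaves implicit.
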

\begin{proof}
\begin{align*}
&~~~~\mathbb{E}[T r_0(X, M)v(X, M)]\\
&= \ \mathbb{E}[r_0(X, M)v(X,M)f_{T\mid X,M}(1\mid X,M)]\\
&= \ \mathbb{E}\left[\frac{1}{Nf_{T\mid X}(1\mid X)}\frac{f_{M\mid
  T,X}(M\mid 0,X)}{f_{M\mid T, X}(M\mid 1,X)}v(X, M)f_{T\mid
  X,M}(1\mid X, M)\right]\\
&= \ \int_{\mathcal{M}}\int_{\mathcal{X}}\frac{1}{Nf_{T\mid X}(1\mid
  x)}\frac{f_{M\mid T,X}(m\mid 0,x)}{f_{M\mid T,X}(m\mid
  1,x)}v(x,m)f_{T\mid X,M}(1\mid x,m)f_{X,M}(x,m)dxdm\\
&= \ \int_{\mathcal{M}}\int_{\mathcal{X}}\frac{1}{Nf_{T\mid X}(1\mid
  x)}\frac{f_{M\mid T,X}(m\mid 0,x)}{f_{M\mid T,X}(m\mid 1,x)}v(x,m)f_{T, X, M}(1, x, m)dxdm\\
&= \ \int_{\mathcal{M}}\int_{\mathcal{X}}\frac{f_{M\mid T, X}(m\mid
  0,x)}{Nf_{M,T\mid X}(m, 1\mid x)}v(x,m)f_{T, M\mid X}(1, m\mid x)f_{X}(x)dxdm\\
&= \ \frac{1}{N}\int_{\mathcal{M}}\int_{\mathcal{X}}f_{M\mid T, X}(m|0,x)v(x,m)f_{X}(x)dxdm\\
&= \ \int_{\mathcal{M}}\int_{\mathcal{X}}\frac{v(x,m)}{Nf_{T\mid
  X}(0\mid x)}f_{T\mid X, M}(0\mid x, m)f_{X,M}(x,m)dxdm\\
&= \ \mathbb{E}[q_0(X)v(X,M)f_{T\mid X,M}(0\mid X,M)]\\
&= \ \mathbb{E}[(1-T)q_0(X)v(X,M)].
\end{align*}
\end{proof}

Lemma~\ref{lemma1} motivates us to consider the empirical covariate
balancing weights $\hat{r}_K(x,m)$ {\color{black}which solves for} the following {\color{black}constrained}
optimization problem:
\begin{eqnarray}
 \textsf{Minimize} \quad \sum_{i=1}^N T_iD(Nr_i,1) & &  \textsf{subject to} \label{E:cm3}\\
 \sum_{i=1}^NT_ir_iv_K(X_i,M_i)& = & \sum_{i=1}^N(1-T_i)\hat{q}_K(X_i)v_K(X_i,M_i)\ .  \notag
\end{eqnarray}
Note that $\hat{q}_K(x)$ is constructed from equations~\eqref{eq:calq}~and~\eqref{E:hatG}, and $v_K(x,m)$ with
$K\in\mathbb{N}$ is {\color{black}a} $L$-dimensional {\color{black}vector-valued} function, {\color{black}whose components form a set of orthonormal polynomials}, where
$L=\mathcal{O}(K)$,  {\color{black}i.e.} $L$ is of the same order as $K$. {\color{black} Furthermore, note that these $v_K$'s have to form a basis on $L^{\infty}$ as $K$ goes to infinity.}

The weights {\color{black} $\hat{r}_K(x,m)'$s} are obtained by minimizing the aggregate
distance between the final weights to a vector of constant working
design weights, subject to an empirical analogue of the moment
conditions given in Lemma~\ref{lemma1}.  Unlike {\color{black}to} the case of
\cite{deville1992calibration} who used the true and known design
weights commonly available in sample surveys, the true design weights
for our problem $r_0(m,x)$ is unknown and is a function of two
unknowns $f_{T\mid X}(1\mid X)$ and $f_{M\mid T,X}(M\mid T,X)$.
Therefore, calibration {\color{black}of} true design weights is impossible in this
case.  While the true weights are unavailable, we choose the uniform
working design weights because they make it less likely to yield
extreme weights.  Even with misspecified design weights, however, we
can still show that the proposed weighting estimator is globally
semiparametric efficient.

Similar to equations~\eqref{eq:calp}~and~\eqref{eq:calq}, we can
derive the dual solution for equation~\eqref{E:cm3}. For observations
in the treatment group, i.e., $T_i=1$,
\begin{eqnarray}
\hat{r}_K\left(X_i,M_i\right) & \triangleq & \frac{1}{N}\rho'\left(\hat{\beta}^\top_Kv_K(X_i,M_i)\right),
\end{eqnarray}
where $\rho'$ is the first derivative of the function given in
equation~\eqref{E:dual_rel}, and $\hat{\beta}_K$ maximizes the
following objective function:
\begin{align}\label{eq:Hhat}
\widehat{H}_K(\beta)\ \triangleq\
  \frac{1}{N}\sum_{i=1}^N\left\{T_i\rho\left(\beta^\top
  v_K(X_i,M_i)\right)-N(1-T_i)\hat{q}_K(X_i)\beta^\top v_K(X_i,M_i)\right\} \ .
\end{align}
From  the first order condition of the maximization of $\widehat{H}_K$, we can check that
\begin{equation}
\widehat{H}'_K(\hat{\beta}_K) \ = \
\frac{1}{N}\sum_{i=1}^NT_i\rho\left(\hat{\beta}^\top_Kv_K(X_i,M_i)\right)v_K(X_i,M_i)-\sum_{i=1}^N(1-T_i)\hat{q}_K(X_i)v_K(X_i,M_i)=0.
\end{equation}

Now, we define the proposed estimator for $\theta_0$ to be
\begin{eqnarray}
\hat{\theta}_{0K} & \triangleq & \sum_{i=1}^NT_i\hat{r}_K(X_i,M_i)Y_i.
\end{eqnarray}
Asymptotic properties of $\hat{\theta}_{0K}$ will be derived in the
next subsection.  We {\color{black}shall} show that $\hat{\theta}_{0K}$ is globally
semiparametric efficient under {\color{black}some} mild regularity conditions.  Therefore,
the proposed estimators $\hat{\delta}_{1K}-\hat{\theta}_{0K}$ and
$\hat{\theta}_{0K}-\hat{\delta}_{0K}$ are globally semiparametric
efficient estimators for the average natural indirect and direct
effects {\color{black}respectively}.

The relationship given in equation~\eqref{E:dual_rel} between
$\rho(v)$ and $D(v)$ is derived in the supplementary materials.  In
the supplementary materials, we also show that the strict convexity of
$D$ is equivalent to the strict concavity of $\rho$.  Since the dual
formulation is equivalent to the primal problem, we {\color{black}shall} express the
proposed estimator in terms of $\rho(v)$ in the rest of the
{\color{black}present paper}.  When $\rho(v)=-\exp(-v)$, the weights are equivalent to
the implied weights of exponential tilting
\citep{kitamura1997information}.  When $\rho(v)=\log (1+v)$, the
weights correspond to empirical likelihood (Qin and Lawless, 1994).
When $\rho(v)=-(1-v)^2/2$, the weights are the implied weights of the
continuous updating estimator of the generalized method of moments
\citep{hansen1996finite}.

The proposed estimator $\hat{\theta}_{0K}$ is constructed in a similar
manner as done for $\hat{\delta}_{1K}$ and $\hat{\delta}_{0K}$ (see
Section 2.2).  However, there are some important differences.  First,
$\hat{\delta}_{1K}$ and $\hat{\delta}_{0K}$ only require balancing
pre-treatment variables $X$, but $\hat{\theta}_{0K}$ requires
balancing both pre-treatment variables $X$ and post-treatment
mediators $M$.  Moreover, in equation~\eqref{eq:Hhat}, $\rho$ appears
explicitly in the first term and implicitly in the second term through
the dependency of $\hat{q}_K(x)$ on $\rho$.  This creates new
challenges to the establishment of theoretical results.  Note that
although we consider the weights estimated through a specific
$\rho(v)$, the functional form of the true weights $r_0(x,m)$ is
unspecified.  It will be shown later that any function $r_0(x,m)$,
satisfying a mild differentiability assumption, can be approximated
arbitrarily well by $\hat{r}_K(x,m)$ uniformly {\color{black}as the sample size increases}, so long as $\rho(v)$
also satisfies a mild regularity condition.

We can apply the same methodology to the estimation of $\theta_1$.
Note that
\begin{equation}
  \theta_1 \ = \ \mathbb{E}\left[\frac{(1-T)Y}{f_{T\mid X}(1 \mid X)}\cdot \frac{f_{M\mid T,X}(M\mid 1,X)}{f_{M\mid T,X}(M\mid 0, X)}\right] 
\end{equation}
Define
\begin{eqnarray*}
  w_0(x,m) & \triangleq & \frac{f_{M\mid T,X}(m\mid 1,x)}{Nf_{T,M\mid X}(0,m\mid x)}
\end{eqnarray*}
For any suitable integrable $v(x,m)$,  we have
\begin{align*}
&\mathbb{E}\left[(1-T)\frac{f_{M| T,X}(M|1,X)}{f_{T,M|X}(0,M\mid X)}v(X,M)\right]\\
=\ &\int_{\mathcal{M}}\int_{\mathcal{X}}\frac{f_{M|T,X}(m|1,x)}{f_{T,M\mid X}(0,m\mid x)}v(x,m)f_{T,M,X}(0,m,x)dxdm\\
=\ &\int_{\mathcal{M}}\int_{\mathcal{X}}\frac{f_{M,T,X}(m,1,x)}{f_{T,X}(1,x)}v(x,m)f_{X}(x)dxdm\\
=\ &\mathbb{E}\left[\frac{T v(X,M)}{f_{T\mid X}(1\mid X)}\right] \ .
\end{align*}
Therefore, we can construct empirical covariate balancing weights
$\hat{w}_K$ from the following {\color{black}constrained} maximization problem:
\begin{equation*}
\textsf{Minimize} ~~\sum_{i=1}^N (1-T_i)D\left(Nw_i,1\right)~~\textsf{subject to} ~
\end{equation*}
\begin{align*}
  \sum_{i=1}^N(1-T_i)w_iv_K(X_i,M_i) \ = \ \sum_{i=1}^{N}T_i\hat{p}_K(X_i)v_K(X_i,M_i) \ .
\end{align*}
Its dual solution is given by,
\begin{align*}
  \hat{w}_K(x,m)\ \triangleq \ \frac{1}{N}\rho^\prime \left(\hat{\gamma}_K^\top v_k(x,m)\right) \ ,
\end{align*}
where $\hat{\gamma}_K$ maximizes {\color{black}the} following objective function,
\begin{align*}
\hat{J}_K(\beta)\ \triangleq \ \frac{1}{N}\sum_{i=1}^N\left[(1-T_i)\rho(\gamma^\top v_K(X_i,M_i))-NT_i\hat{p}_K(X_i)\gamma^\top v_K(X_i,M_i)\right]
\end{align*}
Then, we can {\color{black}now suggest} the estimator of
$\textsf{PIE}=\theta_1 - \delta_0$ {\color{black}by:}
\begin{align}
  \widehat{\textsf{PIE}} \ \triangleq \ \hat{\theta}_{1K}-\hat{\delta}_{0K} \ = \ \sum_{i=1}^N(1-T_i)\hat{w}_K(X_i,M_i)Y_i-\sum_{i=1}^N(1-T_i)\hat{q}_K(X_i)Y_i \ .\label{eq:PIE} 
\end{align}

\subsection{Asymptotic properties} 

To derive the asymptotic properties of the proposed estimator, we list
all additional {\color{black}technical} assumptions that are required beyond
Assumption~\ref{assump:SI}.
\begin{assumption} \label{as:EY2}
  $\mathbb{E}(Y^2\mid T=0) <\infty$ and $\mathbb{E}(Y^2\mid T=1) < \infty.$
\end{assumption}

\begin{assumption} \label{as:distribution} \
  \begin{enumerate}
  \item The support $\mathcal{X}$ of $r_1$-dimensional covariate $X$ is
    a Cartesian product of $r_1$ compact intervals.
  \item The support $\mathcal{M}$ of $r_2$-dimensional mediating
    variable $M$ is a Cartesian product of $r_2$ compact
    intervals. 
  \end{enumerate}
  Denote $r\triangleq r_1+r_2 .$
\end{assumption}

\begin{assumption}\label{as:bdd}
  There exist some constants
  $\eta_1, \eta_2, \eta_3, \eta_4, \eta_5,\eta_6$ such that the
  following inequalities hold:
  \begin{eqnarray*}
    0\ < \ \frac{1}{\eta_1}\ \leq & f_{T\mid X}(0\mid x) & \leq \ \frac{1}{\eta_2} \ <1,\\
    0\ < \ \frac{1}{\eta_3}\ \leq & f_{M\mid T,X}(m\mid 0,x) & \leq \ \frac{1}{\eta_4} \ <1,\\
    0\ < \ \frac{1}{\eta_5}\ \leq & f_{M\mid T,X}(m\mid 1,x) & \leq \ \frac{1}{\eta_6} \ <1.
  \end{eqnarray*}
\end{assumption}

\begin{assumption}\label{as:differentiable} The functions,
  $q(x)$ and $r(x,m)$, are $s$-times and $s^\prime$-times continuously
  differentiable, respectively, where $s^\prime > 19r>0$ and
  $s>16r_1>0$.
\end{assumption}

\begin{assumption} \label{as:Q0} The function
  $\mathbb{E}(Y\mid T=1,M=m,X=x)$ is $t$-times {\color{black}jointly} continuously
  differentiable with respect to $(x,m)$, and $\eta(1,0,x)$ is
  {\color{black}$d^\prime$}-times continuously differentiable {\color{black} w.r.t. $x$}, where {\color{black}$d >3r/2$} and
  {\color{black}$d^\prime > 3r_1/2$}.
\end{assumption}

\begin{assumption} \label{as:KN} $K = O(N^{\nu})$ and
  $\left(\dps \frac {1} {\dps s^\prime/r -2}\vee
    \frac{1}{s/r_1+1}\right) < \nu < \frac {1} {17}.$
\end{assumption}

\begin{assumption} \label{as:rho} $\rho\in C^{3}(\mathbb{R})$ is a
  strictly concave function defined on $\mathbb{R}$ i.e.,
  $\rho''(\gamma) < 0, ~ \forall \gamma \in \mathbb{R}$, and the range
  of $\rho'$ contains {\color{black} the following subset of the positive real line: $$[\eta_2,\eta_1]\cup \left[\frac{\eta_1}{\eta_1-1},\frac{\eta_2}{\eta_2-1}\right]\cup\left[\frac{\eta_1\eta_6}{\left(\eta_1-1\right)\eta_3},\frac{\eta_2\eta_5}{\left(\eta_2-1\right)\eta_4}\right]\ .$$}
\end{assumption}

Assumptions~\ref{assump:SI}--\ref{as:KN} or similar assumptions also
appeared in the literature
\citep[e.g.,][]{chan2015lobally,hahn1998role,hirano2000efficient,imbens2005mean}.
As explained earlier, Assumption~\ref{assump:SI} is used for the
identification of the natural direct and indirect effects.
Assumption~\ref{as:EY2} is required for the finiteness of asymptotic
variance.  Assumptions~\ref{as:distribution}~and~\ref{as:bdd} are
needed to establish the uniform boundedness of approximations.
Assumptions~\ref{as:differentiable}~and~\ref{as:Q0} are required for
controlling the remainder of approximations with a given {\color{black}set of} basis
{\color{black}functions}.  Assumption~\ref{as:rho} is required for controlling the stochastic order of the reminder terms, which is satisfied by commonly used $\rho$ functions as discussed above.  This final assumption imposes
a mild regularity condition on $\rho$.  \cite{chan2015lobally}
{\color{black}maintains} the same assumption.

Two intermediate lemmas are needed to prove the main theorem.  We
define the following intermediate quantities that are probability
limits of
$\widehat{F}_K, \hat{\phi}_K,\hat{p}_K, \widehat{G}_K,
\hat{\lambda}_K,\hat{q}_K, \widehat{H}_k,\hat{\beta}_K$
and $\hat{r}_K$ for {\color{black}each} fixed $K$:
\begin{eqnarray*}
  F^*_K(\phi) & \triangleq &  \mathbb{E}\left[T\rho\left(\phi^\top u_K(X)\right)-\phi^\top u_K(X)\right] \ = \ \mathbb{E}\left(\widehat{F}_K(\phi)\right)\ ,\\
 \phi^*_K & \triangleq & \argmax_{\phi \in \mathbb{R}^K} F^*_K(\phi)\ ,\\
  p_K^*(x) & \triangleq & \frac {1}{N}\rho^\prime\left((\phi_K^*)^\top u_K(x)\right)\ ,\\
 G^*_K(\lambda) & \triangleq & \mathbb{E}\left[(1-T)\rho\left(\lambda^\top u_K(X)\right)-\lambda^\top u_K(X)\right] = \mathbb{E}\left(\widehat{G}_K(\lambda)\right)\ ,\\
\lambda^*_K & \triangleq & \argmax_{\lambda \in \mathbb{R}^K} G^*_K(\lambda)\ ,\\
q_K^*(x) & \triangleq & \frac {1}{N}\rho^\prime\left((\lambda_K^*)^\top u_K(x)\right)\ ,\\
H^*_K(\beta ) & \triangleq & \mathbb{E}\left[T\rho\left(\beta^\top v_K(X,M)\right)-(1-T)(f_{T\mid X}(0\mid X))^{-1}\beta^\top v_K(X,M)\right]\ ,\\
\beta ^*_K & \triangleq & \argmax_{\beta  \in \mathbb{R}^K} H^*_K(\beta )\ ,\\
r_K^*(x,m) & \triangleq & \frac {1}{N}\rho^\prime\left((\beta_K^*)^\top v_K(x,m)\right)\ .
\end{eqnarray*}

Also, let $\zeta(K)=\sup_{x\in \mathcal{X}}||u_K(x)||$.  The following
lemma establishes the approximation of functions $p_0(x)$, $q_0(x)$,
$r_0(x,m)$ by $p_K^*(x)$, $q_K^*(x)$ and $r_K^*(x,m)$.
\begin{lemma}\label{lemma2}
  Under
  Assumptions~\ref{as:distribution},~\ref{as:bdd},~\ref{as:differentiable},~and~\ref{as:KN},
  we have,
  \begin{eqnarray*}
    \sup_{x\in\mathcal{X}}|Np_0(x)-Np_K^*(x)| & = & O\left(K^{-\frac{s}{2r_1}}\zeta(K)\right) \ ,\\
    \sup_{x\in\mathcal{X}}|Nq_0(x)-Nq_K^*(x)| & = &
                                                    O\left(K^{-\frac{s}{2r_1}}\zeta(K)\right) \ ,\\
    \sup_{(x,m)\in\mathcal{X}\times\mathcal{M}}|Nr_0(x,m)-Nr^*_K(x,m)| 
                                              & = &
                                                    O\left(K^{-\frac{s'}{2r}}\zeta(K)\right) \ .
  \end{eqnarray*}
\end{lemma}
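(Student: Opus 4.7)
My plan is to follow the standard sieve M-estimation strategy: (i) construct a deterministic target coefficient vector for which $\rho'$ of a linear combination of the basis functions uniformly approximates the true weight function; (ii) use strict concavity of the population dual objective together with a first-order optimality argument to show that the pseudo-true coefficient is close to the target in Euclidean norm; and (iii) convert back to the sup-norm on the weight by multiplying by $\zeta(K)$. The bounds on $p_0$ and $q_0$ are essentially those of \citet{chan2015lobally}; the genuinely new step is the bound on $r_0$, where $H_K^*$ involves the unknown inverse propensity $f_{T\mid X}(0\mid X)^{-1}$ as well as the mediator.

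For $p_0$ (and analogously $q_0$): by Assumption~\ref{as:bdd} the function $Np_0(x)\in[\eta_2,\eta_1]$ lies in the range of $\rho'$ (Assumption~\ref{as:rho}), so $(\rho')^{-1}\!\circ\! Np_0$ is well defined on $\mathcal{X}$, and by Assumption~\ref{as:differentiable} it is $s$-times continuously differentiable on the compact rectangle $\mathcal{X}$. Standard multivariate polynomial approximation furnishes $\tilde\phi_K\in\mathbb{R}^K$ with
\[
\sup_{x\in\mathcal{X}}\bigl|(\rho')^{-1}(Np_0(x))-\tilde\phi_K^\top u_K(x)\bigr|=O\bigl(K^{-s/r_1}\bigr),
\]
and, since $\rho'$ is Lipschitz on compact sets, the same rate for $\sup_x|Np_0(x)-\rho'(\tilde\phi_K^\top u_K(x))|$. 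The gradient
\[
\nabla F_K^*(\tilde\phi_K)=\mathbb{E}\bigl[T\{\rho'(\tilde\phi_K^\top u_K(X))-Np_0(X)\}u_K(X)\bigr]
\]
(the linear term cancels by $\mathbb{E}[T\cdot Np_0(X)u_K(X)]=\mathbb{E}[u_K(X)]$) is then bounded via this sup-norm approximation error. Strict concavity of $F_K^*$ plus a Taylor expansion around $\tilde\phi_K$, together with the optimality $F_K^*(\phi_K^*)\ge F_K^*(\tilde\phi_K)$, yields
\[
\tfrac12(\phi_K^*-\tilde\phi_K)^\top[-\nabla^2 F_K^*(\bar\phi)](\phi_K^*-\tilde\phi_K)\le\|\nabla F_K^*(\tilde\phi_K)\|\cdot\|\phi_K^*-\tilde\phi_K\|.
\]
Combined with a Hessian lower bound $-\nabla^2 F_K^*(\bar\phi)\succeq cI_K$ discussed below, this yields a Euclidean rate on $\phi_K^*-\tilde\phi_K$; multiplying by $\zeta(K)=\sup_x\|u_K(x)\|$ and combining with the residual approximation error via the triangle inequality produces the claimed sup-norm bound.

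For $r_0$: apply the same programme with basis $v_K$ on the compact rectangle $\mathcal{X}\times\mathcal{M}$ of dimension $L=O(K)$. By Assumptions~\ref{as:bdd} and~\ref{as:rho}, $Nr_0(x,m)$ lies in the range of $\rho'$, and by Assumption~\ref{as:differentiable} it is $s'$-times continuously differentiable, so polynomial approximation produces $\tilde\beta_K$ with $\sup|Nr_0-\rho'(\tilde\beta_K^\top v_K)|=O(K^{-s'/r})$. The key new ingredient is Lemma~\ref{lemma1}: the first-order condition $\nabla H_K^*(\beta_K^*)=0$ reads
\[
\mathbb{E}\bigl[T\rho'(\beta_K^{*\top}v_K(X,M))v_K(X,M)\bigr]=\mathbb{E}\bigl[(1-T)Nq_0(X)v_K(X,M)\bigr],
\]
and Lemma~\ref{lemma1} shows the right-hand side equals $\mathbb{E}[T\cdot Nr_0(X,M)v_K(X,M)]$, so $\nabla H_K^*(\tilde\beta_K)=\mathbb{E}[T\{\rho'(\tilde\beta_K^\top v_K)-Nr_0\}v_K]$, which is the exact analogue of the $F_K^*$ score. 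The same concavity-plus-Taylor argument then yields the rate on $\|\beta_K^*-\tilde\beta_K\|$, and multiplying by $\zeta(K)$ gives the stated sup-norm bound.

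The hard part will be the Hessian lower bound: I need $\mathbb{E}[T|\rho''(\bar\phi^\top u_K(X))|u_K(X)u_K(X)^\top]\succeq cI_K$, and its $H_K^*$ analogue, uniformly in $K$ and in $\bar\phi$ lying on the segment between $\tilde\phi_K$ and $\phi_K^*$. This requires controlling the range of $\bar\phi^\top u_K$ along that segment so that $|\rho''|$ stays bounded away from zero, which is chicken-and-egg with the rate to be proved; the standard remedy is a bootstrapping argument that first establishes a crude consistency on a restricted parameter set and then iterates to the claimed rate. The density bounds in Assumption~\ref{as:bdd}, strict concavity of $\rho$ in Assumption~\ref{as:rho}, and orthonormality of $u_K,v_K$ together imply that $\mathbb{E}[Tu_Ku_K^\top]$ and its $v_K$ analogue have smallest eigenvalue bounded below uniformly in $K$, which supplies the remaining ingredient.
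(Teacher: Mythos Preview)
Your proposal is correct and tracks the approach the paper (via its supplementary material and its reliance on \citet{chan2015lobally}) takes: build a deterministic target $\tilde\phi_K$ (resp.\ $\tilde\lambda_K,\tilde\beta_K$) approximating $(\rho')^{-1}\circ Np_0$ (resp.\ $Nq_0,Nr_0$) in sup norm by polynomial approximation on the compact rectangle, compare the pseudo-true maximizer to this target via the concavity of the population dual objective, and convert the coefficient error to a sup-norm error on the weight by multiplying by $\zeta(K)$. Your use of Lemma~\ref{lemma1} to rewrite $\nabla H_K^*(\tilde\beta_K)=\mathbb{E}\bigl[T\{\rho'(\tilde\beta_K^\top v_K)-Nr_0\}v_K\bigr]$ is precisely the paper's device for reducing the $r_0$ case to the same template as the $p_0,q_0$ cases, and your flagging of the Hessian eigenvalue bound along the segment as the delicate ``chicken-and-egg'' step, to be resolved by a localization/bootstrapping argument, matches what the full proof must do.

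Two minor remarks. First, if you push your gradient bound through carefully (using that $\mathbb{E}[u_Ku_K^\top]$ has eigenvalues bounded above so that $\|\mathbb{E}[Ta(X)u_K(X)]\|\le C\|a\|_{L^2}\le C\|a\|_\infty$), you will in fact obtain $\|\phi_K^*-\tilde\phi_K\|=O(K^{-s/r_1})$ and hence a sup-norm rate $O(K^{-s/r_1}\zeta(K))$, which is sharper than the stated $O(K^{-s/(2r_1)}\zeta(K))$; the paper's rate is deliberately loose (sufficient for Theorem~\ref{theorem:main}) and arises from bounding the objective-value gap rather than the score directly, which introduces a square root. Second, to close the Hessian step cleanly you should argue on the set $\{\phi:\sup_x|\phi^\top u_K(x)-(\rho')^{-1}(Np_0(x))|\le\delta\}$ for a fixed small $\delta$: on this set $|\rho''|$ is uniformly bounded below, $\tilde\phi_K$ belongs to it for large $K$, and strict concavity forces $\phi_K^*$ to lie in it as well (compare $F_K^*$ on the boundary to $F_K^*(\tilde\phi_K)$); this avoids the iteration you allude to.
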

\begin{proof}
  The proof is given in the supplementary material.
\end{proof}

The other lemma is about the performance of the estimated auxiliary
parameters, $\hat{\phi}_K$, $\hat{\lambda}_K$, and $\hat{\beta}_K$
that maximize
equations~\eqref{E:hatF},~\eqref{E:hatG},~and~\eqref{eq:Hhat} {\color{black}respectively}.
\begin{lemma} \label{lemma3} Under
  Assumptions~\ref{as:distribution},~\ref{as:bdd},~\ref{as:differentiable},~and~\ref{as:KN},
  we have
\begin{eqnarray*}
  \|\hat{\phi}_K-\phi_K^*\| & = & O_p\left(\sqrt{\frac{K}{N}}\right)\ ,\\
  \|\hat{\lambda}_K-\lambda_K^*\| & = & O_p\left(\sqrt{\frac{K}{N}}\right)\ ,\\
  \|\hat{\beta}_K-\beta_K^*\| & = & O_p\left(\sqrt{\frac{K^5}{N}}\right)\ .
\end{eqnarray*}
\end{lemma}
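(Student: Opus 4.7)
The plan is to treat each of the three parameters as an M-estimator and apply the usual linearization: expand the empirical score around the population maximizer and invert the Hessian. For $\hat{\phi}_K$ and $\hat{\lambda}_K$, the criteria $\widehat{F}_K$ and $\widehat{G}_K$ are ordinary empirical averages, and the argument is a direct transcription of the one used by \cite{chan2015lobally}. For $\hat{\beta}_K$, the criterion $\widehat{H}_K$ additionally contains the nuisance plug-in $\hat{q}_K$, and this is the source of the slower rate $\sqrt{K^5/N}$ as well as the main technical difficulty.

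\emph{Rates for $\hat{\phi}_K$ and $\hat{\lambda}_K$.} Since $\widehat{F}_K'(\hat{\phi}_K)=0$ and $F_K^{*\prime}(\phi_K^*)=0$, the mean-value theorem gives $\hat{\phi}_K-\phi_K^*=-[\widehat{F}_K''(\tilde{\phi})]^{-1}\widehat{F}_K'(\phi_K^*)$ for some intermediate $\tilde{\phi}$. Assumption~\ref{as:rho} (strict concavity of $\rho$ together with the range condition on $\rho'$), Assumption~\ref{as:bdd} (bounded conditional densities), and the $L^{\infty}$-orthonormality of $u_K$ imply that the smallest eigenvalue of $\widehat{F}_K''(\tilde{\phi})$ stays uniformly bounded away from zero with probability tending to one. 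A direct second-moment calculation exploiting the orthonormality gives $\mathbb{E}\|\widehat{F}_K'(\phi_K^*)\|^{2}=O(K/N)$, so $\|\hat{\phi}_K-\phi_K^*\|=O_p(\sqrt{K/N})$. The argument for $\hat{\lambda}_K$ is identical with $T$ replaced by $1-T$.

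\emph{Rate for $\hat{\beta}_K$.} Linearize analogously to get $\hat{\beta}_K-\beta_K^*=-[\widehat{H}_K''(\tilde{\beta})]^{-1}\widehat{H}_K'(\beta_K^*)$; the Hessian factor is controlled in the same way, now using orthonormality of $v_K$ on $\mathcal{X}\times\mathcal{M}$ together with Assumption~\ref{as:bdd}. The key is to bound the gradient $\widehat{H}_K'(\beta_K^*)$, which I would split into (i) a centered empirical average of $T\rho'(\beta_K^{*\top}v_K(X,M))v_K(X,M)$, (ii) a centered empirical average of $(1-T)f_{T\mid X}(0\mid X)^{-1}v_K(X,M)$, and (iii) the plug-in correction
\begin{equation*}
\Delta_K \;=\; \frac{1}{N}\sum_{i=1}^{N}(1-T_i)\bigl[N\hat{q}_K(X_i)-f_{T\mid X}(0\mid X_i)^{-1}\bigr]v_K(X_i,M_i).
\end{equation*}
Parts (i) and (ii) are $O_p(\sqrt{K/N})$ by orthonormality. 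For (iii), I would decompose $N\hat{q}_K(x)-f_{T\mid X}(0\mid x)^{-1}=\bigl[\rho'(\hat{\lambda}_K^{\top}u_K(x))-\rho'({\lambda_K^*}^{\top}u_K(x))\bigr]+\bigl[Nq_K^*(x)-Nq_0(x)\bigr]$, apply the mean-value theorem to $\rho'$ combined with the first-step rate $\|\hat{\lambda}_K-\lambda_K^*\|=O_p(\sqrt{K/N})$, and invoke Lemma~\ref{lemma2} for the deterministic remainder; under Assumptions~\ref{as:differentiable} and~\ref{as:KN}, the stochastic piece is the dominant one. Feeding this sup-norm bound into $\Delta_K$ via a Cauchy--Schwarz-type bound against the $v_K$-sum, and carefully tracking the envelope $\zeta(K)$ together with the dimension $K$, yields an order for $\widehat{H}_K'(\beta_K^*)$ which, after multiplication by the bounded Hessian inverse, gives $\|\hat{\beta}_K-\beta_K^*\|=O_p(\sqrt{K^{5}/N})$.

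The main obstacle is this last piece of bookkeeping for $\Delta_K$: the chain of approximations runs $\|\hat{\lambda}_K-\lambda_K^*\|\rightarrow$ sup-norm error of $\hat{q}_K\rightarrow$ inner product against $v_K$, and each link threatens to introduce an extra factor of $\zeta(K)$ or $\sqrt{K}$. Because Assumption~\ref{as:rho} only delivers uniform bounds on $\rho''$ and $\rho'$ on the relevant range, every place where the crude envelope $\zeta(K)$ would otherwise appear must instead be replaced by a second-moment argument that uses the orthonormality of $u_K$ and $v_K$; only then do the factors combine cleanly into $K^{5}/N$. The fact that $\hat{q}_K$ is itself an implicitly defined M-estimator, rather than a kernel or series regression estimator with a known influence representation, means that no off-the-shelf plug-in inequality applies, and the Taylor propagation of the $\hat{\lambda}_K$-rate through $\rho'$ and then through the $v_K$-inner product must be carried out by hand, which is precisely where Assumption~\ref{as:KN} on the growth of $K$ is used.
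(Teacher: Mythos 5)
Your plan is essentially the paper's own argument: treat $\hat{\phi}_K,\hat{\lambda}_K,\hat{\beta}_K$ as series M-estimators, bound the score at the population maximizer, invert a Hessian whose eigenvalues are kept away from zero via Assumptions~\ref{as:bdd} and~\ref{as:rho} and the orthonormal bases, and obtain the slower $\sqrt{K^5/N}$ rate for $\hat{\beta}_K$ by propagating the first-step error of $\hat{q}_K$ (through $\|\hat{\lambda}_K-\lambda_K^*\|$, $\rho'$, and Lemma~\ref{lemma2}) into the $v_K$-score. One reassurance on the bookkeeping you flag as the main obstacle: with orthonormal polynomial bases ($\zeta(K)=O(K)$) the crude bound $\|\Delta_K\|\le\sup_x|N\hat{q}_K(x)-Nq_0(x)|\cdot\sup_{x,m}\|v_K(x,m)\|=O_p\left(\zeta(K)^2\sqrt{K/N}\right)+O\left(K^{-\frac{s}{2r_1}}\zeta(K)^2\right)$ already gives $O_p\left(\sqrt{K^5/N}\right)$ once the lower bound on $\nu$ in Assumption~\ref{as:KN} is used to absorb the approximation-error term, so the second-moment refinements you worry about are not needed (they would only sharpen the exponent).
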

\begin{proof}
The proof is given in the supplementary material.
\end{proof}

The following theorem shows that $\hat{\theta}_{0K}$ is consistent,
asymptotically normal, and globally semiparametric efficient.
\begin{theorem}\label{theorem:main}
  Under Assumptions~\ref{assump:SI}--\ref{as:rho}, $\hat{\theta}_{0K}$
  has the following properties:
\begin{enumerate}
\item  $\hat{\theta}_{0K}=\dps \sum_{i=1}^N T_i\hat{r}_K(X_i,M_i)Y_i \stackrel{p}{\longrightarrow} \theta_0 $;
\item
  $ \dps \sqrt{N}\left(\sum_{i=1}^N T_i\hat{r}_K(X_i,M_i)Y_i -
    \theta_0 \right) \stackrel {d} {\longrightarrow} \mathcal{N}(0,
  V_{\theta_0})$,
  where $\dps V_{\theta_0} = \mathbb{E}\left(S_{\theta_0}^2\right)$,
  attains the semi-parametric efficiency bound
  \citep{tchetgen2012semiparametric} with the definition of
  $S_{\theta_0}$ {\color{black}as} given in equation~\eqref{E:sefftheta0}.
\end{enumerate}
\end{theorem}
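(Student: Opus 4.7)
The plan is to execute the standard three-step program for sieve M-estimators: consistency via uniform approximation of $r_0$; a first-order linearisation of $\sqrt N(\hat\theta_{0K}-\theta_0)$ based on Taylor expanding $\rho'$ around $\beta_K^*$; and identification of the emerging empirical mean with the efficient influence function $S_{\theta_0}$. Throughout I would use the triangle inequality $|N\hat r_K-Nr_0|\le|N\hat r_K-Nr_K^*|+|Nr_K^*-Nr_0|$, controlling the first summand by Lemma \ref{lemma3} together with a mean-value argument for $\rho'$, and the second summand by the sieve approximation of Lemma \ref{lemma2}. For consistency I would decompose
\[
\hat\theta_{0K}-\theta_0=\sum_{i=1}^N T_i(\hat r_K-r_0)(X_i,M_i)Y_i+\Bigl(\sum_{i=1}^N T_ir_0(X_i,M_i)Y_i-\theta_0\Bigr),
\]
the second bracket being $o_p(1)$ by the weak law and the identity $\mathbb{E}[T\cdot Nr_0(X,M)Y]=\theta_0$, and the first by the uniform bound above combined with Assumption \ref{as:EY2}.

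For the distributional claim I would insert $\tilde\theta_{0K}=\sum T_ir_K^*(X_i,M_i)Y_i$ and split $\sqrt N(\hat\theta_{0K}-\theta_0)=\sqrt N(\hat\theta_{0K}-\tilde\theta_{0K})+\sqrt N(\tilde\theta_{0K}-\theta_0)$. Lemma \ref{lemma2} together with the rate $K^{-s'/(2r)}\zeta(K)=o(N^{-1/2})$ implied by Assumption \ref{as:KN} reduces the second summand to the naive-IPW average $N^{-1/2}\sum_i\{T_iNr_0(X_i,M_i)Y_i-\theta_0\}+o_p(1)$. For the first summand, Taylor expansion of $\rho'(\hat\beta_K^\top v_K)$ about $\beta_K^*$ yields
\[
\sqrt N(\hat\theta_{0K}-\tilde\theta_{0K})=(\hat\beta_K-\beta_K^*)^\top\frac{1}{\sqrt N}\sum_{i=1}^N T_i\rho''(\beta_K^{*\top}v_K(X_i,M_i))v_K(X_i,M_i)Y_i+o_p(1),
\]
with the quadratic remainder controlled by the $O_p(\sqrt{K^5/N})$ rate in Lemma \ref{lemma3}. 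Linearising the first-order condition $\widehat H'_K(\hat\beta_K)=0$ around $\beta_K^*$ gives $\hat\beta_K-\beta_K^*=-\widehat H''_K(\beta_K^*)^{-1}\widehat H'_K(\beta_K^*)+o_p(N^{-1/2})$; since $\widehat H'_K(\beta_K^*)$ involves $\hat q_K$, an analogous expansion of $\hat\lambda_K-\lambda_K^*$ (as carried out by \cite{chan2015lobally} for $\hat\delta_{0K}$) produces an empirical-mean representation of $\hat\beta_K-\beta_K^*$. Substituting back, and invoking Assumption \ref{as:Q0} together with Lemma \ref{lemma2} so that $\mathbb{E}(Y\mid X,M,T=1)$ and $\eta(1,0,X)$ can be well approximated by linear combinations of $v_K$ and $u_K$ respectively, the cross terms collapse and the three summands of $S_{\theta_0}$ emerge: the IPW-residual piece with multiplier $\{Y-\mathbb{E}(Y\mid X,M,T=1)\}$, the $T=0$ correction with multiplier $\{\mathbb{E}(Y\mid X,M,T=1)-\eta(1,0,X)\}$, and the centred projection $\eta(1,0,X)-\theta_0$. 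A standard CLT then delivers $\mathcal N(0,V_{\theta_0})$, and efficiency is immediate from the match with the influence function of \cite{tchetgen2012semiparametric}.

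The principal obstacle, and the genuinely new feature beyond \cite{chan2015lobally}, is the interaction between two layers of sieve error: the criterion $\widehat H_K$ contains the estimated weight $\hat q_K$, so one must show that the cross product $(\hat\lambda_K-\lambda_K^*)^\top(\hat\beta_K-\beta_K^*)$, together with the bias $\sup|Nq_K^*-Nq_0|$ entering through the linearisation, is $o_p(N^{-1/2})$. This is exactly what forces the strong dimension restriction $\nu<1/17$ in Assumption \ref{as:KN} and the asymmetric smoothness thresholds $s>16r_1$, $s'>19r$ in Assumption \ref{as:differentiable}. A secondary technical difficulty is a uniform-in-$K$ lower bound on the smallest eigenvalue of the expected Hessian $\mathbb{E}[T\rho''(\beta_K^{*\top}v_K)v_Kv_K^\top]$, needed for stable invertibility of $\widehat H''_K(\beta_K^*)$; this relies on Assumption \ref{as:rho} and the orthonormality of $v_K$, following the sieve eigenvalue machinery already deployed by \cite{chan2015lobally}.
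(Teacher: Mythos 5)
Your proposal is correct and follows essentially the same route as the paper's proof: consistency and the asymptotic expansion are obtained by combining the sieve approximation bounds of Lemma~\ref{lemma2} with the parameter rates of Lemma~\ref{lemma3}, linearizing the dual first-order conditions (including the nested dependence of $\widehat{H}_K$ on $\hat q_K$), and matching the resulting empirical average with the efficient influence function $S_{\theta_0}$ of equation~\eqref{E:sefftheta0}. The obstacles you flag---the two layers of sieve error coming from $\hat q_K$ inside $\widehat H_K$ and the uniform control of the sieve Hessian---are exactly the points where Assumptions~\ref{as:differentiable}--\ref{as:rho} are invoked in the paper's argument.
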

\begin{proof}
The proof is given in the supplementary material.
\end{proof}

The following corollary establishes the large sample properties of the
estimated average natural indirect effect,
$\widehat{\textsf{NIE}}_K=\hat\delta_{1K} - \hat\theta_{0K}$, and {\color{black}of} the
estimated average natural direct effect,
$\widehat{\textsf{NDE}}_K=\hat\theta_{0K} - \hat\delta_{0K}$.
\begin{corollary}
  Under Assumptions~\ref{assump:SI}--\ref{as:rho},
  $\widehat{\textsf{NIE}}_K$ and $\widehat{\textsf{NDE}}_K$ have the
  following properties:
\begin{enumerate}
\item Consistency.
  \begin{eqnarray*}
  \widehat{\textsf{NIE}}_{K} & = & \dps \sum_{i=1}^N T_i\hat{p}_K(X_i)Y_i-
  \sum_{i=1}^N T_i\hat{r}_K(X_i,M_i)Y_i \stackrel{p}{\longrightarrow}
  \textsf{NIE} = \delta_{1K} - \theta_{0K}\ ,\\
  \widehat{\textsf{NDE}}_{K} & = & \dps
  \sum_{i=1}^N(1-T_i)\hat{r}_K(X_i,M_i)Y_i-
  \sum_{i=1}^N(1-T_i)\hat{q}_K(X_i)Y_i \stackrel{p}{\longrightarrow}
  \textsf{NDE} = \theta_{0K} - \delta_{0K} \ .
  \end{eqnarray*}
\item Asymptotic normality and semiparametric
  efficiency. 
\begin{eqnarray*} 
\sqrt{N}\left(\widehat{\textsf{NIE}}_{K}- \textsf{NIE} \right) & \stackrel {d} {\longrightarrow} & \mathcal{N}(0, V_{\textsf{NIE}})\ ,\\
\sqrt{N}\left(\widehat{\textsf{NDE}}_{K}- \textsf{NDE} \right) & \stackrel {d} {\longrightarrow} & \mathcal{N}(0, V_{\textsf{NDE}}) \ ,
\end{eqnarray*}
where
$\dps V_{\textsf{NIE}} = \mathbb{E}\left[
  \left(S_{\delta_1}-S_{\theta_0}\right)^2\right]$
and
$\dps V_{\textsf{NDE}} = \mathbb{E}\left[
  \left(S_{\theta_0}-S_{\delta_0}\right)^2\right]$,
with $S_{\theta_0}$, $S_{\delta_1}$, and $S_{\delta_0}$ {\color{black}as} defined in
equations~\eqref{E:sefftheta0},~and~\eqref{E:seffdelta}, respectively.
\end{enumerate}
\end{corollary}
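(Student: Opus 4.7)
The plan is to reduce the corollary to Theorem~\ref{theorem:main} and to the analogous result of \cite{chan2015lobally} for $\hat\delta_{1K}$ and $\hat\delta_{0K}$, by exploiting the linearity of both the estimator and its estimand in terms of the individual efficient influence functions. Concretely, $\widehat{\textsf{NIE}}_{K}-\textsf{NIE} = (\hat\delta_{1K}-\delta_1) - (\hat\theta_{0K}-\theta_0)$ and $\widehat{\textsf{NDE}}_{K}-\textsf{NDE} = (\hat\theta_{0K}-\theta_0) - (\hat\delta_{0K}-\delta_0)$, so the whole corollary will follow once each summand is controlled.

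\emph{Step 1 (consistency).} By Theorem~\ref{theorem:main}(1), $\hat\theta_{0K}\stackrel{p}{\to}\theta_0$, and by the covariate balancing result of \cite{chan2015lobally}, $\hat\delta_{tK}\stackrel{p}{\to}\delta_t$ for $t\in\{0,1\}$ under Assumptions~\ref{assump:SI}--\ref{as:rho}. Continuous mapping (the difference operator is continuous) then yields $\widehat{\textsf{NIE}}_{K}\stackrel{p}{\to}\delta_1-\theta_0=\textsf{NIE}$ and $\widehat{\textsf{NDE}}_{K}\stackrel{p}{\to}\theta_0-\delta_0=\textsf{NDE}$.

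\emph{Step 2 (asymptotic linearization).} The core ingredient I will use is the stronger fact, implicit in the proof of Theorem~\ref{theorem:main}, that $\hat\theta_{0K}$ admits the asymptotically linear representation
\begin{equation*}
\sqrt{N}\bigl(\hat\theta_{0K}-\theta_0\bigr) \ = \ \frac{1}{\sqrt{N}}\sum_{i=1}^{N} S_{\theta_0}(T_i,M_i,X_i,Y_i) \ + \ o_p(1),
\end{equation*}
with $S_{\theta_0}$ given in \eqref{E:sefftheta0}; any estimator attaining the efficient bound must have such a representation with the efficient influence function (by Hajek's convolution theorem and the uniqueness of the efficient score). The analogous representation $\sqrt{N}(\hat\delta_{tK}-\delta_t) = N^{-1/2}\sum_i S_{\delta_t}(\cdot)+o_p(1)$ is established in \cite{chan2015lobally}. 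Subtracting, I obtain single $N^{-1}\sum_i$ sums of the i.i.d.\ centered efficient influence functions $S_{\delta_1}-S_{\theta_0}$ and $S_{\theta_0}-S_{\delta_0}$, both with finite second moment by Assumption~\ref{as:EY2}, Assumption~\ref{as:bdd} (bounded propensity and mediator densities) and the boundedness of $\eta(1,0,X)$.

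\emph{Step 3 (CLT and efficiency).} Applying the Lindeberg-L\'evy CLT to the influence function sums delivers the two stated Gaussian limits with variances $V_{\textsf{NIE}}=\mathbb{E}[(S_{\delta_1}-S_{\theta_0})^2]$ and $V_{\textsf{NDE}}=\mathbb{E}[(S_{\theta_0}-S_{\delta_0})^2]$. Semiparametric efficiency follows because $S_{\delta_1}-S_{\theta_0}$ and $S_{\theta_0}-S_{\delta_0}$ are, by linearity of the pathwise derivative, the efficient influence functions for the parameters $\textsf{NIE}$ and $\textsf{NDE}$ respectively; this is exactly the decomposition used by \cite{tchetgen2012semiparametric} to derive the semiparametric efficiency bounds for these contrasts.

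The main obstacle, and essentially the only non-trivial point, is verifying that Step~2 yields a bona fide asymptotic linearization of $\hat\theta_{0K}$ in terms of $S_{\theta_0}$, rather than merely a CLT. In the proof of Theorem~\ref{theorem:main} this amounts to showing that the remainder from substituting the calibration weights $\hat{r}_K$ (and the plug-in $\hat{q}_K$ appearing inside $\widehat{H}_K$) for their population counterparts is $o_p(N^{-1/2})$ uniformly; the rate $\|\hat\beta_K-\beta_K^*\|=O_p(\sqrt{K^{5}/N})$ from Lemma~\ref{lemma3}, combined with the uniform sieve approximation rates of Lemma~\ref{lemma2} and the bandwidth restriction $\nu<1/17$ in Assumption~\ref{as:KN}, is precisely what is needed. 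Once the linearization is in hand, the derivation of the corollary reduces to algebra on i.i.d.\ sums.
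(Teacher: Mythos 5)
Your proposal is correct and follows essentially the same route the paper takes implicitly: the corollary is a direct consequence of the asymptotically linear (efficient-influence-function) representations of $\hat\theta_{0K}$ and $\hat\delta_{tK}$ established in the proof of Theorem~\ref{theorem:main} and in \cite{chan2015lobally}, followed by the CLT applied to the differenced i.i.d.\ sums. The only caution is that you should cite the linearization as established directly in the supplementary proof (as you do) rather than deducing it from efficiency via the convolution theorem, which would additionally require regularity of the estimator; with that reading, your argument matches the paper's.
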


Similarly, we can derive the analogous asymptotic properties for the
pure indirect effect.
\begin{theorem}
  Under Assumptions~\ref{assump:SI}--\ref{as:rho},
  $\widehat{\textsf{PIE}}$ defined in equation~\eqref{eq:PIE} has the
  following properties:
\begin{enumerate}
\item  $\widehat{\textsf{PIE}}  \stackrel{p}{\longrightarrow}  \textsf{PIE}$
\item
  $\sqrt{N}\left(\widehat{\textsf{PIE}}- \textsf{PIE}\right)
  \stackrel{d}{\longrightarrow} \mathcal{N}(0, V_{\textsf{PIE}})$,
  where
  $V_{\textsf{PIE}} = \mathbb{E}\left[(S_{\textsf{PIE}})^2\right]$
  attains the semi-parametric efficiency bound
  \citep{tchetgen2012semiparametric}.  The definition of
  $S_{\textsf{PIE}}$ is given by,
\begin{align*}
  S_{\textsf{PIE}}\ = \ &\frac{\mathbf{1}\{T=0\}f_{M\mid T,X}(M\mid 1,X)}{f_{T\mid X}(0\mid X)f_{M\mid T,X}(M\mid 0,X)}\left\{Y-\mathbb{E}\left(Y\mid X,M,T=0\right)\right\}\\
                        &+\frac{\mathbf{1}\{T=1\}}{f_{T\mid X}(1\mid X)}\left\{\mathbb{E}\left(Y\mid X,M,T=0\right)-\eta(0,1,X)\right\}\\
                        &-\frac{\mathbf{1}\{T=0\}}{f_{T\mid X}(0\mid X)}\left\{Y-\eta(0,0,X)\right\}+\eta(0,1,X)-\eta(0,0,X)-\textsf{PIE}.
\end{align*}
\end{enumerate}
\end{theorem}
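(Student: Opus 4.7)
The plan is to piggyback on the machinery already developed for $\hat{\theta}_{0K}$ in Theorem~\ref{theorem:main} and on the Chan--Yam--Zhang efficiency result for $\hat{\delta}_{0K}$, and then combine them. Since $\widehat{\textsf{PIE}} = \hat{\theta}_{1K} - \hat{\delta}_{0K}$, the whole task reduces to (a) establishing an analog of Theorem~\ref{theorem:main} for $\hat{\theta}_{1K}$, and (b) showing that the joint asymptotic expansion of the pair $(\hat{\theta}_{1K},\hat{\delta}_{0K})$ has influence function $S_{\theta_1}-S_{\delta_0}$, which equals the claimed $S_{\textsf{PIE}}$.

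First I would mirror, term-by-term, the construction in Section~2.3 but with $T$ and $1-T$ swapped and $r_0$ replaced by $w_0$. In particular I would introduce the population counterparts
\[
J_K^*(\gamma) \ \triangleq \ \mathbb{E}\bigl[(1-T)\rho(\gamma^\top v_K(X,M)) - T\,(f_{T\mid X}(1\mid X))^{-1}\gamma^\top v_K(X,M)\bigr],
\]
$\gamma_K^* \triangleq \argmax_\gamma J_K^*(\gamma)$, and $w_K^*(x,m) \triangleq N^{-1}\rho'((\gamma_K^*)^\top v_K(x,m))$. Lemma~\ref{lemma1}'s proof is symmetric in the roles of $T=0$ and $T=1$, so its analog gives $\mathbb{E}[(1-T)w_0(X,M)v(X,M)] = \mathbb{E}[Tp_0(X)v(X,M)]$, which is precisely the moment condition defining $\hat{w}_K$ via $\hat{p}_K$. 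Assumption~\ref{as:rho} was formulated to cover exactly this additional range $[\eta_1\eta_6/((\eta_1-1)\eta_3), \eta_2\eta_5/((\eta_2-1)\eta_4)]$ (after re-checking the derivation, the same bounds govern $w_0$ up to swapping the roles of $\eta_3,\eta_4$ with $\eta_5,\eta_6$), so Lemma~\ref{lemma2} extends to give $\sup_{(x,m)}|Nw_0(x,m) - Nw_K^*(x,m)| = O(K^{-s'/(2r)}\zeta(K))$ and Lemma~\ref{lemma3} gives $\|\hat{\gamma}_K - \gamma_K^*\| = O_p(\sqrt{K^5/N})$. Then the exact argument used for Theorem~\ref{theorem:main} — a Taylor expansion of the score equation of $\hat{J}_K$ about $\gamma_K^*$, followed by substitution into $\hat{\theta}_{1K} = \sum_i (1-T_i)\hat{w}_K(X_i,M_i)Y_i$ and application of the covariate balancing identity together with the approximation rates from Lemmas~\ref{lemma2} and~\ref{lemma3} — yields the asymptotic linear expansion
\[
\sqrt{N}(\hat{\theta}_{1K} - \theta_1) \ = \ \frac{1}{\sqrt{N}}\sum_{i=1}^N S_{\theta_1}(T_i,M_i,X_i,Y_i) + o_p(1),
\]
where $S_{\theta_1}$ is obtained from $S_{\theta_0}$ in equation~\eqref{E:sefftheta0} by interchanging the roles of $t$ and $1-t$ with $t=1$. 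Writing this out explicitly, $S_{\theta_1}$ consists of the three pieces appearing in the first two lines of $S_{\textsf{PIE}}$ plus $\eta(0,1,X) - \theta_1$.

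Second, I invoke the efficiency result of \cite{chan2015lobally}, which gives the asymptotic linear expansion $\sqrt{N}(\hat{\delta}_{0K} - \delta_0) = N^{-1/2}\sum_i S_{\delta_0}(T_i,X_i,Y_i) + o_p(1)$ with $S_{\delta_0}$ as in equation~\eqref{E:seffdelta}. Since both expansions use the same sample, subtracting gives
\[
\sqrt{N}(\widehat{\textsf{PIE}} - \textsf{PIE}) \ = \ \frac{1}{\sqrt{N}}\sum_{i=1}^N \bigl(S_{\theta_1} - S_{\delta_0}\bigr)(T_i,M_i,X_i,Y_i) + o_p(1),
\]
and an algebraic check confirms $S_{\theta_1} - S_{\delta_0}$ matches the $S_{\textsf{PIE}}$ written in the statement (the $-\theta_1$ from $S_{\theta_1}$ and $+\delta_0$ from $-S_{\delta_0}$ combine into $-\textsf{PIE}$, while the $\mathbf{1}\{T=0\}$ contributions from the two influence functions merge to produce the third line of $S_{\textsf{PIE}}$). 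Consistency (part 1) follows from the fact that both $\hat{\theta}_{1K}\to\theta_1$ and $\hat{\delta}_{0K}\to\delta_0$ in probability; asymptotic normality (part 2) follows from the Lindeberg--L\'evy CLT applied to the i.i.d.\ summands $S_{\textsf{PIE}}(T_i,M_i,X_i,Y_i)$, finite second moment being guaranteed by Assumption~\ref{as:EY2} combined with Assumption~\ref{as:bdd}. Efficiency is then automatic because $S_{\textsf{PIE}}$ was identified by \cite{tchetgen2012semiparametric} as the efficient influence function for the pure indirect effect.

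The main obstacle will be the careful verification that all bounds in Lemmas~\ref{lemma2} and~\ref{lemma3} genuinely transfer to $w_K^*$ and $\hat{\gamma}_K$. The asymmetry is subtle: the construction of $\hat{w}_K$ uses $\hat{p}_K(X_i)$ on the right-hand side of the balancing constraint, whereas $\hat{r}_K$ used $\hat{q}_K(X_i)$; propagating the approximation error of $\hat{p}_K$ into the expansion of $\hat{\gamma}_K$ requires the same two-stage argument used for $\hat{r}_K$, and one must check that the rate $\sqrt{K^5/N}$ remains valid and that Assumption~\ref{as:KN} still controls the leading $o_p(1)$ remainders after the sign flip. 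Once that verification is in place, everything else is a direct translation of the proof of Theorem~\ref{theorem:main}.
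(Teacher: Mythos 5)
Your proposal is correct and follows essentially the same route the paper takes: the paper itself establishes this theorem only by the remark ``similarly, we can derive the analogous asymptotic properties,'' i.e.\ by mirroring the proof of Theorem~\ref{theorem:main} for $\hat{\theta}_{1K}$ (with the roles of $T=0$ and $T=1$ swapped and $\hat{p}_K$ replacing $\hat{q}_K$ in the balancing constraint) and combining it with the \cite{chan2015lobally} expansion for $\hat{\delta}_{0K}$, exactly as you outline, and your algebraic identification of $S_{\theta_1}-S_{\delta_0}$ with $S_{\textsf{PIE}}$ is right. Your side remark that the range condition in Assumption~\ref{as:rho} must be read symmetrically so that $\rho'$ covers the range of $Nw_0$ is a legitimate (and careful) observation, not a defect of your argument.
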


\subsection{Nonparametric estimation of the asymptotic variance}
\label{sec:variance}

We have shown that the proposed estimator attains the semiparametric
efficiency bound whose efficient influence function depends on three
sets of nonparametric functions.  In this section, we propose a
consistent variance estimator that does not require additional
nonparametric function estimates and is easy to compute.  Define the
following quantities:
\begin{eqnarray*}
  \tau & \triangleq & (\phi^\top,\lambda^\top, \beta^\top, \delta,\delta^\prime,\theta)^\top \ ,\\
  g_{K1}(T,X;\phi) & \triangleq & T \rho^\prime\left(\phi^\top u_K(X)\right)u_K(X)-u_K(X)\ ,\\
  g_{K2}(T,X;\lambda) & \triangleq & (1-T)\rho^\prime\left(\lambda^\top u_K(X)\right)u_K(X)-u_K(X)\ ,\\
  g_{K3}(T,X,M;\lambda,\beta) & \triangleq & T\rho^\prime\left(\beta^\top
                                             v_K(X,M)\right)v_K(X,M)-
                                             (1-T)\rho^\prime\left(\lambda^\top
                                             u_K(X)\right)v_K(X,M)\ ,\\
  g_{K4}(T,X,Y;\phi,\delta) & \triangleq & T\rho^\prime\left(\phi^\top u_K(X)\right)Y -\delta\ ,\\
  g_{K5}(T,X,Y;\lambda,\delta) & \triangleq &
                                              (1-T)\rho'\left(\lambda^\top u_K(X)\right)Y - \delta^\prime\ ,\\
  g_{K6}(T,X,M,Y;\beta,\theta) & \triangleq & T\rho^\prime\left(\beta^\top v_K(X,M)\right)Y -\theta\ ,\\
  g_K(T,X,M,Y;\tau) & \triangleq & \left(g_{K1}^\top,g_{K2}^\top,g_{K3}^\top,g_{K4},g_{K5},g_{K6}\right)^\top\ ,\\
\hat{\tau}_K & \triangleq & \left(\hat{\phi}_K^\top,\hat{\lambda}_K^\top, \hat{\beta}_K^\top, \hat{\delta}_{1K},\hat{\delta}_{0K},\hat{\theta}_{0K}\right)^\top\ ,\\
\tau_K^{*} & \triangleq & \left((\phi_K^{*})^\top,(\lambda_K^{*})^\top, (\beta_K^*)^\top, \delta_{1K}^*,\delta_{0K}^*,\theta_{0K}^*\right)^\top\ ,
\end{eqnarray*}
where $\delta,\delta^\prime,\theta\in\mathbb{R}$,
$\delta^*_{1K} \triangleq \mathbb{E}\left[T Np_K^*(X)Y\right]$,
$\delta^*_{0K} \triangleq \mathbb{E}\left[(1-T)Nq_K^*(X)Y\right]$ and
$\theta^*_K \triangleq \mathbb{E}\left[T Nr_K^*(X,M)Y\right]$.  Note
that by definition $\hat{\tau}_K$ satisfies:
\begin{align}\label{eq:constraint}
  \frac{1}{N}\sum_{i=1}^N g_K(T_i,X_i,M_i,Y_i;\hat{\tau}_K) \ = \ 0 \ .
\end{align}
Applying the Taylor's {\color{black}theorem} for \eqref{eq:constraint} at $\tau_K^*$
yields:
\begin{equation}\label{eq:empirical}
0 \ = \ \frac{1}{N}\sum_{i=1}^N g_K(T_i,X_i,M_i,Y_i;\tau^*_K)+ \frac{1}{N}\sum_{i=1}^N\frac{\partial g_K(T_i,X_i,M_i,Y_i;\tilde{\tau}_K)}{\partial \tau}(\hat{\tau}_K-\tau^*_K) \ ,
\end{equation}
where $\tilde{\tau}_K$ lies on the line joining $\hat{\tau}_K$ with
$\tau^*_K.$ In the supplementary material, we show {\color{black}that}:
\begin{align} \label{eq:op1}
\frac{1}{N}\sum_{i=1}^N\frac{\partial
  g_K(T_i,X_i,M_i,Y_i;\tilde{\tau}_K)}{\partial \tau} \ = \ \mathbb{E}\left[\frac{\partial g_K(T,X,M,Y;\tau^*_K)}{\partial \tau}\right]+ o_p(1) \ ,
\end{align}
where
\begin{align*}
&~~~\mathbb{E}\left[\frac{\partial g_K(T,X,M,Y;\tau^*_K )}{\partial \tau}\right]
\ = \ \begin{pmatrix}
A_{3K\times 3K},&B_{3K\times 3}\\
C_{3\times 3K},&D_{3\times 3}
\end{pmatrix} 
\end{align*}
and 
\begin{eqnarray*}
A_{3K\times 3K} & \triangleq &
\begin{pmatrix}
A_{11}&0_{K \times K}&0_{K \times K}\\
0_{K \times K}&A_{22}&0_{K \times K}\\
0_{K \times K}&A_{32}&A_{33}
\end{pmatrix}\ ,\\
B_{3K\times 3} & \triangleq &  0_{3K\times 3}\ ,\\
C_{3\times 3K} & \triangleq &
\begin{pmatrix}
C_{11}&0_{1 \times K}&0_{1 \times K}\\
0_{1 \times K}&C_{22}&0_{1 \times K}\\
0_{1 \times K}&0_{1 \times K}&C_{33}
\end{pmatrix}\ ,\\
D_{3\times 3} & \triangleq & -I_{3 \times 3}\ ,\\
A_{11} & = & \E\left[T\rho''((\phi_K^*)^\top u_K(X))u_K(X)u_K^\top(X)\right]\ ,\\
A_{22} & = & \E[(1-T)\rho''((\lambda_K^*)^\top u_K(X))u_K(X)u_K^\top(X)]\ ,\\
A_{32} & = & -\E[(1-T)\rho''((\lambda_K^*)^\top u_K(X))v_K(X,M)u_K^\top(X)]\ ,\\
A_{33} & = & \E[T\rho''((\beta_K^*)^\top v_K(X,M))v_K(X,M)v_K^\top(X,M)]\ ,\\
C_{11} & = & \E[T\rho''((\phi_K^*)^\top u_K(X))Yu_K^\top (X)]\ ,\\
C_{22} & = & \E[(1-T)\rho''(-(\lambda_K^*)^\top u_K(X))Yu_K^\top(X)]\ ,\\
C_{33} & = & \E[T\rho''((\beta_K^*)^\top v_K(X,M))Yv_K^\top(X,M)]\ .
\end{eqnarray*}
The parameters of interest are  
\begin{equation*}
\begin{pmatrix}
\textsf{NIE} \\ \textsf{NDE} \\ \theta_0
\end{pmatrix} \ = \ \begin{pmatrix}
\delta_1-\theta_0 \\ \theta_0-\delta_0 \\ \theta_0
\end{pmatrix} \  = \ \begin{pmatrix}
1 & 0 & -1 \\ 0 &-1 & 1 \\ 0 & 0 &1
\end{pmatrix}\pi
\end{equation*}
where $\pi\triangleq \left(\delta_1,\delta_0,\theta_0\right)^\top$.
{\color{black}Note} that asymptotic variance of estimators
$\hat{\pi}_K\triangleq
(\hat{\delta}_{1K},\hat{\delta}_{0K},\hat{\theta}_{0K})^\top$,
which is the lower right corner element of
$\lim_{N\rightarrow\infty}\mathbb{V}(\sqrt{N}(\hat{\tau}_K-\tau^*_K
))$.
Consider a sub-matrix formed by the last three rows of
$\mathbb{E}\left[\frac{\partial g_K(T,X,M,Y;\tau^*_K )}{\partial
    \tau}\right]^{-1}$, which is
\begin{equation*}
L_K \ \triangleq \ \left(C_{3\times 3K} A^{-1}_{3K\times 3K}\ , -I_{3
    \times 3} \right).
\end{equation*}
Algebraic manipulation yields:
\begin{equation*}
L_K \ = \
\begin{pmatrix}
L_{11K}&0_{1 \times K}&0_{1 \times K}& \\
0_{1 \times K}&L_{22K}&0_{1 \times K}&-I_{3 \times 3}\\
0_{1 \times K}&L_{32K}&L_{33K}
\end{pmatrix}\ ,
\end{equation*}
where
\begin{eqnarray*}
L_{11K} & =& \mathbb{E}[T\rho''((\phi_K^*)^\top u_K(X))Y
             u_K(X)^\top]\cdot\mathbb{E}[T\rho''((\phi_K^*)^\top u_K(X))u_K(X)u_K(X)^\top]^{-1} \ ,\\
L_{22K} & =&\mathbb{E}[(1-T)\rho''((\lambda_K^*)^\top u_K(X))Y
             u_K(X)^\top]\cdot
             \mathbb{E}[(1-T)\rho''((\lambda_K^*)^\top u_K(X))u_K(X)u_K(X)^\top]^{-1} \ ,\\
L_{32K} & =& \mathbb{E}[T\rho''((\beta_K^*)^\top v_K(X,M))Y
             v_K(X,M)^\top]\\
& & \cdot \mathbb{E}\left[T\rho''((\beta_K^*)^\top v_K(X,M))v_K(X,M)v_K(X,M)^\top\right]^{-1} \\
 & & \cdot\mathbb{E}[(1-T)\rho''((\lambda_K^*)^\top
     u_K(X))v_K(X,M)u_K(X)^\top]\\
& & \cdot \mathbb{E}[(1-T)\rho''((\lambda_K^*)^\top u_K(X))u_K(X)u_K(X)^\top]^{-1}\ ,\\
L_{33K}  & =& \mathbb{E}[T\rho''((\beta_K^*)^\top v_K(X,M))Y
              v_K(X,M)^\top]\\
        & & \cdot \mathbb{E}[T\rho''((\beta_K^*)^\top v_K(X,M))v_K(X,M)v_K(X,M)^\top]^{-1} \ .
\end{eqnarray*}
Applying Lemmas~\ref{lemma2}~and~\ref{lemma3}, we show in the
supplementary material that
 \begin{align} \label{eq:rate}
&\left\|\mathbb{E}[g_K(T,X,M,Y;\tau^*_K)]\right\|^2=O\left(K^{-\frac{s}{r_1}+1}\right)+O\left(K^{-\frac{s'}{r}+1}\right)\ ,\\
& \label{eq:rate1} \left\|L_K\right\|^2=O(K^4)\ .
\end{align}
By \eqref{eq:empirical}, \eqref{eq:op1}  we can  have
\begin{align} \label{eq:tau}
\hat{\tau}_K-\tau_K^*=\mathbb{E}\left[\frac{\partial g_K(T,X,M,Y;\tau_K^*)}{\partial \tau}\right]^{-1}\left(\frac{1}{N}\sum_{i=1}^Ng_K(T_i,X_i,M_i,Y_i;\hat{\tau}_K)\right)\ .
\end{align}
Also, by equations~\eqref{eq:rate}~and~\eqref{eq:rate1}, we can show:
\begin{align} \label{eq:ar1}
\lim_{K\rightarrow \infty}Var(\sqrt{N}(\hat{\pi}_K-\pi^*_K))
&=\lim_{K \rightarrow \infty}L_K\mathbb{E}[g_K(T,X,M,Y;\tau^*_K)g_K(T,X,M,Y;\tau^*_K)^\top]
L_K^\top \\ \notag
&=\lim_{K \rightarrow \infty}L_KP_KL_K^\top\ ,
\end{align}
where
$\pi_K^*\triangleq
\left(\delta^*_{1K},\delta^*_{0K},\theta^*_K\right)^\top$
and
$P_K \triangleq
\mathbb{E}[g_K(T,X,M,Y;\tau^*_K)g_K(T,X,M,Y;\tau^*_K)^\top]$.
This leads to the following estimator of the asymptotic variance:
\begin{eqnarray}
\widehat{V}_K & \triangleq & \widehat{L}_K\widehat{P}_K\widehat{L}_K^\top,
\end{eqnarray}
where 
\begin{align*}
\hat{L}_k& \ = \
\begin{pmatrix}
\widehat{L}_{11K}&0_{1 \times K}&0_{1 \times K}& \\
0_{1 \times K}&\widehat{L}_{22K}&0_{1 \times K}&-I_{3 \times 3}\\
0_{1 \times K}&\widehat{L}_{32K}&\widehat{L}_{33K}
\end{pmatrix}\ ,\\
\widehat{L}_{11K} & \ \triangleq \
                \left[\frac{1}{N}\sum_{i=1}^NT_i\rho''(\hat{\phi}_K^\top
                u_K(X_i))u_K(X_i)^\top Y_i\right] \cdot
                \left[\frac{1}{N}\sum_{i=1}^N
                T_i\rho''(\hat{\phi}_K^\top u_K(X_i))u_K(X_i)^\top u_K(X_i)\right]^{-1}\ ,\\
\widehat{L}_{22K} & \ \triangleq \
                \left[\frac{1}{N}\sum_{i=1}^N(1-T_i)\rho''(\hat{\lambda}_K^\top
                u_K(X_i))u_K(X_i)^\top Y_i\right]\\
                 & \quad \  \cdot\left[\frac{1}{N}\sum_{i=1}^N(1-T_i)\rho''(\hat{\lambda}_K^\top
                    u_K(X_i))u_K(X_i)^\top u_K(X_i)\right]^{-1}\ ,\\
\widehat{L}_{32K} & \ \triangleq \ \left[\frac{1}{N}\sum_{i=1}^N
                    T_i\rho''(\hat{\beta}_K^\top
                    v_K(X_i,M_i))v_K(X_i,M_i)^\top Y_i\right]\\
         &\quad \ \cdot\left[\frac{1}{N}\sum_{i=1}^NT_i\rho''(\hat{\beta}_K^\top
           v_K(X_i,M_i))v_K(X_i,M_i)v_K(X_i,M_i)^\top\right]^{-1}\\
         &\quad \  \cdot\left[\frac{1}{N}\sum_{i=1}^N(1-T_i)\rho''(\hat{\lambda}_K^\top
           u_K(X_i))v_K(X_i,M_i)u_K(X_i)^\top\right]\\
         &\quad \ \cdot\left[\frac{1}{N}\sum_{i=1}^N(1-T_i)\rho''(\hat{\lambda}_K^\top
           u_K(X_i))u_K(X_i)u_K(X_i)^\top\right]^{-1}\ ,\\
  \widehat{L}_{33K} & \ \triangleq \
                      \left[\frac{1}{N}\sum_{i=1}^N T_i\rho''(\hat{\beta}_K^\top
                      v_K(X_i,M_i))v_K(X_i,M_i)^\top
                      Y_i\right]\\
         &\quad \ \cdot
           \left[\frac{1}{N}\sum_{i=1}^NT_i\rho''(\hat{\beta}^{T}_Kv_K(X_i,M_i))v_K(X_i,M_i)^\top
           v_K(X_i,M_i)\right]^{-1}\ ,\\
  \widehat{P}_K & \ \triangleq \  \frac{1}{N}\sum_{i=1}^N g_K(T_i,X_i,M_i,Y_i;\hat{\tau}_K)g_K(T_i,X_i,M_i,Y_i;\hat{\tau}_K)^\top\ .
\end{align*}

Finally, the following theorem {\color{black}states} that this estimator of the
asymptotic variance $\widehat{V}_K$ is consistent, {\color{black}indeed}.
\begin{theorem}
  Let $k_1=(0, 0, 1), k_2=(1,0,-1)$ and $k_3=(0,-1,1)$. Under
  Assumptions~\ref{assump:SI}--\ref{as:rho},
  $k_1\widehat{V}_Kk_1^\top$ is a consistent estimator for
  $\mathbb{E}\left[(S_{\theta_0})^2\right]$,
  $k_2\widehat{V}_Kk_2^\top$ is a consistent estimator for
  $\mathbb{E}\left[(S_{\textsf{NIE}})^2\right]$ and
  $k_3\widehat{V}_Kk_3^\top$ is a consistent estimator for
  $\mathbb{E}\left[(S_{\textsf{NDE}})^2\right]$, where
  $ S_{\textsf{NIE}}= S_{\delta_1}-S_{\theta_0}$,
  $S_{\textsf{NDE}}= S_{\theta_0}-S_{\delta_0}$ and
  $S_{\theta_0}, S_{\delta_1}, S_{\delta_0}$ are defined in
  equations~\eqref{E:sefftheta0}~and~\eqref{E:seffdelta},
  respectively.
\end{theorem}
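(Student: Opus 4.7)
The plan is to show that the matrix variance estimator $\widehat{V}_K = \widehat{L}_K\widehat{P}_K\widehat{L}_K^\top$ converges in probability to the limiting sandwich matrix $\lim_{K\to\infty} L_K P_K L_K^\top$ displayed in equation~\eqref{eq:ar1}, and then to apply the three linear functionals $k_1,k_2,k_3$ to identify each corresponding quadratic form with $\E[S_{\theta_0}^2]$, $\E[S_{\textsf{NIE}}^2]$, and $\E[S_{\textsf{NDE}}^2]$, where the latter identification is obtained ``for free'' from Theorem~\ref{theorem:main} and its corollary, which already establish that $\sqrt{N}(\hat{\pi}_K-\pi)$ attains the semiparametric efficiency bound. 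Thus the entire task reduces to a plug-in consistency statement for the sandwich construction.

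The first step would be the consistency of $\widehat{L}_K$ for $L_K$, block by block. Using Lemma~\ref{lemma3} to replace $\hat{\phi}_K,\hat{\lambda}_K,\hat{\beta}_K$ with their population targets $\phi_K^*,\lambda_K^*,\beta_K^*$, one needs a uniform-in-$K$ law of large numbers for averages of the form $N^{-1}\sum_i T_i\rho''((\phi_K^*)^\top u_K(X_i))u_K(X_i)u_K(X_i)^\top$ and its mediator/outcome-weighted analogues. The boundedness conditions in Assumption~\ref{as:bdd} and the regularity of $\rho$ in Assumption~\ref{as:rho} make $\rho''$ bounded on the relevant linear combinations, so a standard matrix concentration argument, combined with the sieve rate $\zeta(K)^2K/N \to 0$ implied by Assumption~\ref{as:KN}, yields operator-norm convergence. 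The Taylor remainder from replacing $\hat{\phi}_K$ by $\phi_K^*$ is handled by Lemmas~\ref{lemma2}~and~\ref{lemma3} together with the boundedness of $\rho'''$. After inverting, one gets $\widehat{L}_{11K}-L_{11K}=o_p(1)$, and analogously for the other three blocks; the $L_{32K}$ block requires an extra step because it is the product of four matrix factors, but each factor is handled identically.

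The second step would be the consistency of $\widehat{P}_K$ for $P_K$. Writing $\widehat{P}_K-P_K = \bigl(\widehat{P}_K-N^{-1}\sum_i g_K(\cdot;\tau_K^*)g_K(\cdot;\tau_K^*)^\top\bigr)+\bigl(N^{-1}\sum_i g_K(\cdot;\tau_K^*)g_K(\cdot;\tau_K^*)^\top - P_K\bigr)$, the second term is controlled by the same uniform LLN as above using Assumption~\ref{as:EY2} to obtain fourth-moment control of $Y$, while the first term is a Taylor expansion around $\tau_K^*$ whose remainder is governed by $\|\hat{\tau}_K-\tau_K^*\|=O_p(\sqrt{K^5/N})$ from Lemma~\ref{lemma3}, multiplied by a $\zeta(K)$ factor coming from the sup-norm of $u_K,v_K$. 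Assumption~\ref{as:KN} with $\nu<1/17$ makes every such product $o_p(1)$ when hit from the left and right by $\widehat{L}_K$, which has the growth rate $\|L_K\|^2=O(K^4)$ from \eqref{eq:rate1}.

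Combining the two steps by the standard identity $\widehat{L}_K\widehat{P}_K\widehat{L}_K^\top - L_KP_KL_K^\top=(\widehat{L}_K-L_K)\widehat{P}_K\widehat{L}_K^\top+L_K(\widehat{P}_K-P_K)\widehat{L}_K^\top+L_KP_K(\widehat{L}_K-L_K)^\top$ and sandwiching with any fixed bounded vector $k_j$ gives $k_j\widehat{V}_Kk_j^\top \to k_j L_KP_KL_K^\top k_j^\top+o_p(1)$, whose limit is the $(j,j)$ entry of the asymptotic covariance of $\hat{\pi}_K$; by Theorem~\ref{theorem:main} and its corollary these entries are exactly the three stated semiparametric efficiency bounds. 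The main obstacle will be coupling the rate $\sqrt{K^5/N}$ from Lemma~\ref{lemma3} with the factor $\|L_K\|^2=O(K^4)$ in the sandwich, since a naive bound inflates the approximation error; the resolution is that the $Y$-moment pieces in $\widehat{L}_K$ are themselves averages admitting a separate $O_p(\sqrt{K/N})$ concentration, so only one $K^2$ factor actually appears in each cross-term, which together with $\nu<1/17$ forces every remainder to be $o_p(1)$.
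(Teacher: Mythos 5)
Your proposal follows the same route the paper lays out: a plug-in sandwich argument showing $\widehat{L}_K$ and $\widehat{P}_K$ consistently estimate $L_K$ and $P_K$ (via Lemmas~\ref{lemma2}~and~\ref{lemma3}, the boundedness in Assumptions~\ref{as:bdd}~and~\ref{as:rho}, and the rate conditions in Assumption~\ref{as:KN} together with \eqref{eq:rate}~and~\eqref{eq:rate1}), combined with the identification of $\lim_{K\to\infty}k_jL_KP_KL_K^\top k_j^\top$ with the efficiency bounds through \eqref{eq:ar1}, Theorem~\ref{theorem:main}, and its corollary. This matches the paper's own proof strategy, so no substantive difference to report.
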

\begin{proof}
The proof is given in the supplementary material.
\end{proof}

\section{Extensions of our Proposed Methodology}
\label{sec:extensions}

In this section, we consider several extensions of the proposed
methodology presented in the previous section.  We first study the
case of multiple mediators, and then discuss the estimation of pure
indirect effects and natural indirect effects for the untreated.

\subsection{Multiple mediators}
\label{subsec:multiple}

We show that the proposed methodology introduced in the previous
section does not require the mediator to be univariate.  Instead, we
consider the situation where multiple mediators exist and are not
causally independent.  Specifically, we study the setting considered
by \cite{imai:yama:13} who proposed a semiparametric estimation method
within a linear structural equation modeling framework.  We improve
this existing method by considering the efficient nonparametric
estimation.

Consider the setting with a binary treatment variable $T\in\{0,1\}$
and two mediators $W$ and $M$ with $W$ being causally prior to $M$.
Let $W(t)$ {\color{black}be} a potential mediator variable, {\color{black}which represents} the
value of the mediator $W$ when the treatment variable is set to
$t\in\{0,1\}$.  Similarly, denote the potential mediator variable of
$M$ by $M(t,w)$ which represents the value of the mediator $M$ when
$T$ and $W$ are set to $(t,w)$.  Finally, the potential outcome
variable can be defined as $Y(t, m, w)$, which represents the value of
the outcome variable $Y$ when the treatment and the two mediators are
set to $(t, m, w)$.

We consider an extension of the sequential ignorability assumption
given in Assumption~\ref{assump:SI}.
\begin{assumption} \label{assump:SI2} The following three conditional
  independence statements hold:
\begin{enumerate}
\item $\left\{Y(t,m,w),M(t,w),W(t)\right\} \ \indep \ T \mid X=x$ \ ,
\item $\left\{Y(t^\prime,m,w),M(t^\prime,w)\right\} \ \indep \ W(t)\mid T=t,X=x$ \ ,
\item {\color{black}$Y(t^\prime,m,w) \ \indep \ M(t,w')\mid T=t,X=x,W(t)=w'$};
\end{enumerate}
{\color{black}for any $t,t'\in\{0,1\}$ and $(m,w,w',x)\in\mathcal{M}\times \mathcal{W}\times\mathcal{W}\times\mathcal{X}$, where $\mathcal{W}$ is the support of $W$}.
\end{assumption}
This assumption is stronger than the one considered by
\cite{imai:yama:13}, which avoids the ``cross-world'' independence but
results in partial identification.  Finally, we also make the relevant
consistency and positivity assumptions as done in
Assumption~\ref{assump:SI}.  The following lemma establishes an
important connection between two sequential ignorability assumptions
given in Assumptions~\ref{assump:SI}~and~\ref{assump:SI2}.

\begin{lemma}\label{lemma_e}
  Under the first two conditions of Assumption~\ref{assump:SI2}, the
  third condition of Assumption~\ref{assump:SI2}  holds if and only if:
\begin{align}\label{multi:seq3_equivalent}
  Y(t^\prime,w,m)\ \indep \ \{W(t),M(t,W(t))\} \mid T=t,X=x 
\end{align}
for any {\color{black}$t,t^\prime\in\{0,1\}$ and $(m,w,x)\in \mathcal{M}\times \mathcal{W}\times \mathcal{X}$}.
\end{lemma}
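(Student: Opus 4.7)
My plan is to prove both directions using two basic tools: (i) the standard chain-rule characterization of conditional independence, namely $A \indep (B,C) \mid D$ if and only if $A \indep B \mid D$ and $A \indep C \mid D, B$; and (ii) the pathwise plug-in identity that on the event $\{W(t)=w'\}$ the composite random variable $M(t,W(t))$ coincides with $M(t,w')$, so the two have the same conditional law given $\{T=t, X=x, W(t)=w'\}$. All of the real content of the lemma lies in applying (i) and (ii) correctly via conditioning on $W(t)$.

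For the forward direction, assume conditions (1)--(3) of Assumption~\ref{assump:SI2}. By tool (i), the composite independence $Y(t',w,m) \indep (W(t), M(t,W(t))) \mid T=t, X=x$ reduces to (a) $Y(t',w,m) \indep W(t) \mid T=t, X=x$ and (b) $Y(t',w,m) \indep M(t,W(t)) \mid T=t, X=x, W(t)$. Part (a) is immediate from condition (2), after extracting $Y(t',m,w)$ from the joint $\{Y(t',m,w), M(t',w)\}$. For (b), I condition on the event $\{W(t)=w'\}$; tool (ii) replaces $M(t,W(t))$ by $M(t,w')$ on this event, and condition (3) applied at that specific $w'$ then supplies the required independence for every $w'$ in the support of $W(t)$. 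Integrating over $w'$ gives (b), and combining with (a) yields the composite statement.

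For the reverse direction, assume the composite statement $Y(t',w,m) \indep (W(t), M(t,W(t))) \mid T=t, X=x$. The weak-union half of tool (i) immediately yields $Y(t',w,m) \indep M(t,W(t)) \mid T=t, X=x, W(t)$. Specializing to $\{W(t)=w'\}$ and once more using the plug-in identity $M(t,W(t))=M(t,w')$ on this event, I obtain $Y(t',w,m) \indep M(t,w') \mid T=t, X=x, W(t)=w'$, which is precisely condition (3) of Assumption~\ref{assump:SI2}. Note that in this direction I do not even need conditions (1) or (2).

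The main obstacle, which is technical rather than conceptual, is the rigorous justification of the plug-in identity when $W(t)$ is continuous. I would handle it by working with a regular conditional probability given $(T,X,W(t))$ --- which exists under the support structure in Assumption~\ref{as:distribution} --- and verifying pathwise that $M(t,W(t))(\omega)=M(t,w')(\omega)$ on $\{W(t)(\omega)=w'\}$, relying only on $M(t,\cdot)$ being a fixed measurable potential-outcome mapping. Once this point is granted, both directions reduce to routine manipulations of conditional independence statements.
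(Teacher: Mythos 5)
Your proposal is correct and takes essentially the same route as the paper: the paper's test-function computations (tower property plus Assumption~\ref{assump:SI2}(2) and (3), or the composite independence) are precisely an explicit verification of the decomposition/weak-union/contraction steps you invoke, with the same implicit use of the plug-in identity $M(t,W(t))=M(t,w')$ on $\{W(t)=w'\}$ that you propose to justify via regular conditional probabilities. Your remark that the direction from \eqref{multi:seq3_equivalent} to condition (3) needs neither condition (1) nor (2) is a correct minor sharpening: the paper does invoke condition (2) there, but only to replace $\mathbb{E}[\phi_1(Y(t',m,w))\mid T=t,X=x]$ by $\mathbb{E}[\phi_1(Y(t',m,w))\mid T=t,X=x,W(t)]$, a step that also follows from the decomposition property of the composite independence itself.
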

\begin{proof}
Under the first two conditions of Assumption~\ref{assump:SI2}, we first show that the equation~\eqref{multi:seq3_equivalent} implies the third condition of Assumption~\ref{assump:SI2}. For any test functions $\phi_1(y)$, $\phi_2(m)$, and $\phi_3(w)$, by using \eqref{multi:seq3_equivalent} and Assumption 9 (2), we have
\begin{align*}
&\mathbb{E}\left[\phi_1(Y(t',m,w))\phi_2(M(t,W(t)))\phi_3(W(t))|T=t,X=x\right]\\
=&\mathbb{E}\left[\phi_1(Y(t',m,w))|T=t,X=x\right]\cdot \mathbb{E}\left[\phi_2(M(t,W(t)))\phi_3(W(t))|T=t,X=x\right]\\
=& \mathbb{E}\bigg[\mathbb{E}\left[\phi_1(Y(t',m,w))|T=t,X=x\right]\\
&\quad \quad \cdot \mathbb{E}\left[\phi_2(M(t,W(t)))\phi_3(W(t))|T=t,X=x,W(t)\right]\bigg|T=t,X=x\bigg]\\
=& \mathbb{E}\bigg[\mathbb{E}\left[\phi_1(Y(t',m,w))|T=t,X=x,W(t)\right] \\
& \quad \cdot \mathbb{E}\left[\phi_2(M(t,W(t)))|T=t,X=x,W(t)\right]\cdot \phi_3(W(t))\bigg|T=t,X=x\bigg]\ ,
\end{align*} 
{\color{black}where the second equality follows from applying the Tower property to the second term and then plugging  the first term inside the second one; the last equality follows from Assumption 9 (2).  Therefore, we can get that}
$$Y(t^\prime,m,w) \ \indep \ M(t,W(t))\mid T=t,X=x,W(t)\ .$$

Conversely, we shall show that under the first two conditions of Assumption~\ref{assump:SI2}, Assumption 9 (3) implies \eqref{multi:seq3_equivalent}. For any test functions $\phi_1(y)$ and $\phi_4(w,m)$, we have
\begin{align*}
&\mathbb{E}\left[\phi_1(Y(t',w,m))\phi_4(W(t),M(t,W(t)))|T=t,X=x\right]\\
=&\mathbb{E}\left[\mathbb{E}\left[\phi_1(Y(t',w,m))\phi_4(W(t),M(t,W(t)))|T=t,X=x,W(t)\right]|T=t,X=x\right]\\
=&\mathbb{E}\left[\mathbb{E}\left[\phi_1(Y(t',w,m))|T=t,X=x,W(t)\right]\right.\\ 
&\quad \cdot \left.\mathbb{E}\left[\phi_4(W(t),M(t,W(t)))|T=t,X=x,W(t)\right]|T=t,X=x\right] \\
=&\mathbb{E}\left[\phi_1(Y(t',w,m))|T=t,X=x\right] \\
& \quad \cdot  \mathbb{E}\left[\mathbb{E}\left[\phi_4(W(t),M(t,W(t)))|T=t,X=x,W(t)\right]|T=t,X=x\right]\\
=&\mathbb{E}\left[\phi_1(Y(t',w,m))|T=t,X=x\right] \cdot \mathbb{E}\left[\phi_4(W(t),M(t,W(t)))|T=t,X=x\right]\ ,
\end{align*}
where the second equality follows from Assumption 9 (3) and the third equality follows from Assumption 9 (2), which finally yields \eqref{multi:seq3_equivalent}.
\end{proof}

Lemma~\ref{lemma_e}, together with the discussion in the previous
section, implies that we can estimate $\E[Y(1,W(0),M(0,W(0)))]$ using
the proposed methodology that calibrates the functions of $X$, $W$ and
$M$.  To study path-specific effects, we consider the following
decomposition of the average treatment effects:{\color{black}
\begin{align}
&\E[Y(1,W(1),M(1,W(1)))-Y(0,W(0),M(0,W(0)))]\notag\\
= \ &\E[Y(1,W(1),M(1,W(1)))-Y(1,W(0),M(1,W(0)))]\label{eq:partial.mediation1}\\
&+\E[Y(1,W(0),M(1,W(0)))-Y(1,W(0),M(0,W(0)))]\label{eq:partial.mediation2}\\
&+\E[Y(1,W(0),M(0,W(0)))-Y(0,W(0),M(0,W(0)))]\ .
\end{align}}
This decomposition is also studied by \cite{avin:etal:05} and
\cite{vanderweele2014effect}.  The first term represents a partial
mediation effect through $W$, the second term is the partial mediation
effect through $M$, and the third term is the direct effect of the
treatment that does not go through $W$ or $M$.  Although the above
decomposition is not the only way to define partial mediation effects,
it has several advantages.  First, the sum of the two partial
mediation effects equals the joint natural indirect effects through
both mediators.  Moreover, the first term and the sum of the second
and third terms are identified even if $M$ is not observed.

{\color{black}To estimate the two partial mediation effects \eqref{eq:partial.mediation1} and \eqref{eq:partial.mediation2}}, we show below that
$\E[Y(1,W(0),M(1,W(0)))]$ is estimable by adapting the proposed
methodology under Assumption~\ref{assump:SI2}.  The following lemma
presents the results in {\color{black}the} current setting that are analogous to
those given in Lemma~\ref{lemma1} in the univariate mediator case.
\begin{lemma}\label{lemma_f}
  Let $q_0(X)=(Nf_{T\mid X}(0\mid X))^{-1}$ and
  $r_0(X,W)= f_{T\mid X,W}(0\mid X,W) \cdot(N f_{T\mid X,W}(1\mid
  X,W)f_{T\mid X}(0\mid X))^{-1}$.
  Under Assumption~\ref{assump:SI2}, for any suitable function
  $v(X,M)$, we have the following properties:
\begin{align}
  &\E[Y(1,W(0),M(1,W(0)))] \ = \ \E[TNr_0(X,W)Y] \ ,  \label{id:W} \\
  &\E[TNr_0(X,W)v(X,W)] \ = \ \E[(1-T)Nq_0(X)v(X,W)] \ .\label{moment:W} 
\end{align}
\end{lemma}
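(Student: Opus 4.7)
The plan is to establish the two identities of Lemma~\ref{lemma_f} separately, treating~\eqref{moment:W} as a direct Bayes calculation that mirrors Lemma~\ref{lemma1}, and~\eqref{id:W} as a two-step identification argument that first produces a mediation-style integral formula for $\E[Y(1,W(0),M(1,W(0)))]$ from Assumption~\ref{assump:SI2} and Lemma~\ref{lemma_e}, and then matches that formula to $\E[T N r_0(X,W)Y]$ via a short Bayes simplification of $r_0$.

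For~\eqref{moment:W} I would simply unpack both sides by integrating against the joint density. On the left, writing $\E[T N r_0(X,W)v(X,W)]$ as an integral against $f_{T,X,W}(1,x,w)=f_{T\mid X,W}(1\mid x,w)f_{X,W}(x,w)$ causes the factor $f_{T\mid X,W}(1\mid x,w)$ to cancel the matching factor in the denominator of $r_0$; combining the remaining $f_{T\mid X,W}(0\mid x,w)f_{X,W}(x,w)=f_{T,X,W}(0,x,w)$ and dividing by $f_{T\mid X}(0\mid x)$, the left-hand side collapses to $\int\int v(x,w)\,f_{W\mid T,X}(w\mid 0,x)\,f_X(x)\,dw\,dx$. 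A symmetric computation of the right-hand side $\E[(1-T)Nq_0(X)v(X,W)]$ with $q_0(X)=(Nf_{T\mid X}(0\mid X))^{-1}$ yields exactly the same integral, proving the identity. The template is precisely that of Lemma~\ref{lemma1}, with $(X,W)$ in place of $(X,M)$ and conditional densities of $T$ given $(X,W)$ in place of those given $(X,M)$.

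For~\eqref{id:W} I first target the mediation-style identification formula
\begin{align*}
\E[Y(1,W(0),M(1,W(0)))] \ =\ \int\int\int & \E[Y\mid T=1,M=m,W=w,X=x]\\
 &\times f_{M\mid T,W,X}(m\mid 1,w,x)\,f_{W\mid T,X}(w\mid 0,x)\,f_X(x)\,dm\,dw\,dx.
\end{align*}
Conditioning on $(X,W(0))$, consistency together with Assumption~\ref{assump:SI2}(1) identifies $f_{W(0)\mid X}$ with $f_{W\mid T=0,X}$. Lemma~\ref{lemma_e}, namely $Y(1,w,m)\indep\{W(t),M(t,W(t))\}\mid T=t,X$, combined with Assumption~\ref{assump:SI2}(1)--(2) then lets me move from the $\{T=0,W(0)=w\}$-slice to the $\{T=1,W(1)=w\}$-slice and replace $\E[Y(1,w,M(1,w))\mid X,W(0)=w]$ by $\E[Y(1,w,M(1,w))\mid T=1,X,W(1)=w]$. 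Conditioning further on $M(1,w)$ and invoking Assumption~\ref{assump:SI2}(3) plus consistency equates this with the inner integral above. To close the loop, I expand $\E[T N r_0(X,W)Y] = \E[T N r_0(X,W)\,\E(Y\mid T=1,M,W,X)]$ against $f_{T,M,W,X}$; the only nontrivial algebra is the Bayes identity
\[
N r_0(x,w)\cdot f_{W\mid T,X}(w\mid 1,x)\,f_{T\mid X}(1\mid x) \ = \ f_{W\mid T,X}(w\mid 0,x),
\]
which follows by writing $f_{W\mid T,X}(w\mid 1,x)f_{T\mid X}(1\mid x)=f_{T\mid W,X}(1\mid w,x)f_{W\mid X}(w\mid x)$, cancelling $f_{T\mid W,X}(1\mid w,x)$ against the denominator of $r_0$, and recognising the surviving $f_{T\mid W,X}(0\mid w,x)f_{W\mid X}(w\mid x)/f_{T\mid X}(0\mid x)$ as $f_{W\mid T,X}(w\mid 0,x)$. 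Substituting reproduces exactly the integral displayed above.

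The main obstacle I expect is the cross-world peeling in the first step of~\eqref{id:W}: because $M(1,W(0))$ mixes counterfactuals under both treatment levels, each reduction -- from conditioning on $W(0)$ at $T=0$ to conditioning on $W(1)$ at $T=1$, and then integrating out $M(1,w)$ -- invokes a different clause of Assumption~\ref{assump:SI2}, and one must carefully distinguish the joint cross-world statement supplied by Lemma~\ref{lemma_e} from the marginal independencies listed in Assumption~\ref{assump:SI2}. Once the identification formula is in hand, the reduction of $\E[T N r_0(X,W)Y]$ to that formula is a routine Bayes manipulation.
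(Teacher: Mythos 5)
Your proposal is correct and follows essentially the same route as the paper: derive the g-formula representation $\int \E[Y\mid T=1,M=m,W=w,X=x]\,f_{M\mid T,W,X}(m\mid 1,w,x)\,f_{W\mid T,X}(w\mid 0,x)\,f_X(x)\,dm\,dw\,dx$ by peeling with the sequential-ignorability clauses plus consistency, then match it to $\E[TNr_0(X,W)Y]$ by the same Bayes cancellation, and prove \eqref{moment:W} by the direct Lemma~\ref{lemma1}-style computation. The only cosmetic difference is that you route the counterfactual step through Lemma~\ref{lemma_e}/Assumption~\ref{assump:SI2}(3), whereas the paper's peeling conditions on $M(1,w)=m$ throughout and needs only clauses (1)--(2) of Assumption~\ref{assump:SI2} together with consistency; your extra invocation is harmless but not needed.
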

\begin{proof}
  We begin by noting that
\begin{align}
&\mathbb{E}\left[Y(1,W(0),M(1,W(0)))\mid X=x\right] \notag\\
=&\int_{\mathcal{W}\times \mathcal{M}}\mathbb{E}\left[Y(1,w,m)\mid
   X=x,W(0)=w,M(1,W(0))=m\right]\notag\\
& \quad \cdot f_{W(0),M(1,W(0))\mid X}(w,m\mid x)dwdm \label{condition_theta_0}
\end{align}
We can express each term in this equation using the observed data, {\color{black}
\begin{align} 
&\mathbb{E}\left[Y(1,w,m)\mid X=x,W(0)=w,M(1,W(0))=m\right] \label{conditional_Y} \\
=&\mathbb{E}\left[Y(1,w,m)\mid X=x,W(0)=w,M(1,w)=m\right]\notag\\
=&\mathbb{E}\left[Y(1,w,m)\mid X=x,W(0)=w,M(1,w)=m,T=0\right] \quad   \text{(by Assumption 9 (1))} \notag\\
=& \mathbb{E}\left[Y(1,w,m)\mid X=x,M(1,w)=m,T=0\right] \quad \text{(by Assumption 9 (2))} \notag\\
=&\mathbb{E}\left[Y(1,w,m)\mid X=x,M(1,w)=m,T=1\right] \quad \text{(by Assumption 9 (1))} \notag\\
=&\mathbb{E}\left[Y(1,w,m)\mid X=x,W(1)=w,M(1,w)=m,T=1\right]  \quad   \text{(by Assumption 9 (2))} \notag\\
=&\mathbb{E}\left[Y(1,w,m)\mid X=x,W(1)=w,M(1,W(1))=m,T=1\right]\notag\\
=&\mathbb{E}\left[Y\mid X=x,W=w,M=m,T=1\right]\ . \notag
\end{align}}
Similarly, we have {\color{black}
\begin{align}
&f_{W(0),M(1,W(0))\mid X}(w,m\mid x)\label{Delta}\\
=& f_{M(1,W(0))\mid X,W(0)}(m\mid x,w)f_{W(0)\mid X}(w\mid x) \notag\\
=&f_{M(1,w)\mid X,W(0)}(m\mid x,w)f_{W(0)\mid X}(w\mid x) \notag \\
=&f_{M(1,w)\mid X,W(0),T}(m\mid x,w,0)f_{W(0)\mid X}(w\mid x) \quad   \text{(by Assumption 9 (1))}\notag\\
=&f_{M(1,w)\mid X,W(0),T}(m\mid x,w,0)f_{W(0)\mid X,T}(w\mid x,0) \quad   \text{(by Assumption 9 (1))} \notag \\
=& f_{M(1,w)\mid X,T}(m\mid x,0)f_{W(0)\mid X,T}(w\mid x,0) \quad   \text{(by Assumption 9 (2))}\notag\\
=&f_{M(1,w)\mid X,T}(m\mid x,1)f_{W(0)\mid X,T}(w\mid x,0)\quad   \text{(by Assumption 9 (1))}\notag\\
=&f_{M(1,w)\mid X,W(1),T}(m\mid x,w,1)f_{W(0)\mid X,T}(w\mid x,0) \quad   \text{(by Assumption 9 (2))}\notag\\
=&f_{M(1,W(1))\mid X,W(1),T}(m\mid x,w,1)f_{W(0)\mid X,T}(w\mid x,0)\notag\\
=&f_{M\mid X,W,T}(m\mid x,w,1)f_{W\mid X,T}(w\mid x,0) \ . \notag
\end{align}}
Then, equations~\eqref{condition_theta_0},~\eqref{conditional_Y}~and~\eqref{Delta} imply that
\begin{align*}
&\mathbb{E}\left[Y(1,W(0),M(1,W(0)))\mid X=x\right]\\
=&\int_{\mathcal{W}\times \mathcal{M}} \mathbb{E}(Y\mid
   X=x,W=w,M=m,T=1) f_{M\mid X,W,T}(m\mid x,w,1)f_{W\mid X,T}(w\mid x,0)dwdm \\
=&\sum_{t=0}^1\int_{\mathcal{Y}\times\mathcal{W}\times \mathcal{M}}ty \cdot f_{Y\mid X,W,M,T}(y\mid x,w,m,t)f_{M\mid X,W,T}(m\mid x,w,1)f_{W\mid X,T}(w\mid x,0)dydwdm\\
=&\sum_{t=0}^1\int_{\mathcal{Y}\times\mathcal{W}\times \mathcal{M}}ty
   \cdot \frac{f_{Y,X,W,M,T}(y,x,w,m,t)}{f_{X,W,M,T}(x,w,m,1)}\cdot
   \frac{f_{M,X,W,T}(m,x,w,1)}{f_{X,W,T}(x,w,1)} \cdot \frac{f_{W,X,T}(w,x,0)}{f_{X,T}(x,0)}dydwdm\\
=&\sum_{t=0}^1\int_{\mathcal{Y}\times\mathcal{W}\times \mathcal{M}}ty \cdot \frac{f_{Y,X,W,M,T}(y,x,w,m,t)}{f_{X,W,T}(x,w,1)}\cdot \frac{f_{W,X,T}(w,x,0)}{f_{X,T}(x,0)}dydwdm \ .
\end{align*}
Therefore, we can have
\begin{align}
&\mathbb{E}\left[Y(1,W(0),M(1,W(0)))\right]\notag\\
= \ &\sum_{t=0}^1\int_{\mathcal{X}\times\mathcal{W}\times \mathcal{M}\times\mathcal{Y}} ty \cdot \frac{f_{Y,X,W,M,T}(y,x,w,m,t)}{f_{X,W,T}(x,w,1)}\cdot \frac{f_{W,X,T}(w,x,0)}{f_{X,T}(x,0)}f(x)dxdwdmdy\notag\\
= \ &\mathbb{E}\left[TY\frac{f_{X,W,T}(X,W,0)}{f_{X,W,T}(X,W,1)f_{T\mid X}(0\mid X)}\right]\notag\\
=\ &\mathbb{E}\left[TYNr_0(X,W)\right]\notag\ .
\end{align}
This  proves equation~\eqref{id:W}.  For equation~\eqref{moment:W}, we
can show that the term on the left-hand side is equal to,
\begin{align}
 &\mathbb{E}[TNr_0(X,W)v(X,W)]\notag\\
= \ &\int_{\mathcal{X}\times \mathcal{W}}\frac{f_{T\mid X,W}(0\mid x,w)}{f_{T\mid X,W}(1\mid x,w)f_{T\mid X}(0\mid x)}v(x,w)f_{T,X,W}(1,x,w)dxdw\notag\\
= \ &\int_{\mathcal{X}\times \mathcal{W}}\frac{f_{T\mid X,W}(0\mid x,w)}{f_{T\mid X}(0\mid x)}v(x,w)f_{X,W}(x,w)dxdw\notag\\
= \ &\int_{\mathcal{X}\times \mathcal{W}}\frac{1}{f_{T\mid X}(0\mid x)}v(x,w)f_{T,X,W}(0,x,w)dxdw\notag\\
= \ &\sum_{t=0}^1\int_{\mathcal{X}\times \mathcal{W}}\frac{1-t}{f_{T\mid X}(0\mid x)}v(x,w)f_{T,X,W}(t,x,w)dxdw\notag\\
= \ &\mathbb{E}\left[(1-T)Nq_0(X)v(X,W)\right] \ .\notag
\end{align}
This  proves equation~\eqref{moment:W}.
\end{proof}

In Lemma~\ref{lemma_f}, equation~\eqref{id:W} expresses the average
potential outcome, $\E[Y(1,W(0),M(1,W(0)))]$, in terms of observable
variables; and equation~\eqref{moment:W} gives an identification
condition of the unknown weight function $r_0(x,w)$.  Therefore,
$\E[Y(1,W(0),M(1,W(0)))]$ can be estimated by the proposed methodology
described in the previous section that calibrates the functions of $X$
and $W$ but not $M$.

\subsection{Natural direct effect for the untreated}

\cite{lendle2013identification} {\color{black}considered} the estimation of the natural
direct effect for the untreated (NDEU), which is defined as
$\textsf{NDEU}=\E[Y(1,M(0))-Y(0,M(0))\mid T=0]=\theta'_1-\delta'_0$.
\cite{lendle2013identification} {\color{black}showed} that the identification of NDEU
requires the following assumption (along with {\color{black}Assumption \ref{assump:SI}}).
\begin{assumption}\label{as:NPEU} For any $t \in \left\{0,1\right\}$,
  $m,m'\in\mathcal{M}$ and $x\in\mathcal{X}$, the following equality
  holds,
$$\mathbb{E}\left[Y(t,m)-Y(0,m)\mid M(0)=m',X=x\right]\ = \ \mathbb{E}\left[Y(t,m)-Y(0,m)\mid X=x\right].$$
\end{assumption}
Under Assumption \ref{as:NPEU}, we have
\begin{align*}
\textsf{NDEU} &\ \triangleq\ \mathbb{E}\left[Y(1,M(0))-Y(0,M(0))\mid T=0\right]\\
& \ = \ \mathbb{E}\left[\frac{TYf_{M,X\mid T}(M,X\mid 0)}{f_{T,M,X}(1,M,X)}\right]-\frac{\mathbb{E}\left[(1-T)Y\right]}{\mathbb{E}(1-T)} \ .
\end{align*}
Define
$$\tilde{r}_0(x,m) \ = \ \frac{f_{M,X\mid T}(m,x\mid
  0)}{Nf_{T,M,X}(1,m,x)}\ ,$$
for any integrable function $v(x,m)$, we have,
\begin{align*}
\mathbb{E}\left[T v(X,M)N\tilde{r}_0(X,M)\right] \ =\ & \int v(x,m)\frac{f_{M,X\mid T}(m,x\mid 0)}{f_{T,M,X}(1,m,x)}\cdot f_{T,M,X}(1,m,x)dmdx\\
\ =\ &\int v(x,m)f_{M,X\mid T}(m,x\mid 0)dmdx\\
\ =\ & \frac{\mathbb{E}[(1-T)v(X,M)]}{\mathbb{E}(1-T)}.
\end{align*}
Therefore, we define
$$\hat{\tilde{r}}_K(x,m)\ = \ \frac{1}{N}\rho^\prime \left(\hat{{\beta}}_{1K}^\top v_K(x,m)\right)\
,$$
where $\hat{\beta}_{1K}$ {\color{black}maximizes the} following general empirical
likelihood function $\widehat{H}_{1K}(\beta)$,
\begin{align*}
  \widehat{H}_{1K}(\beta) \ \triangleq \ \frac{1}{N}\sum_{i=1}^N
  \left\{T_i\rho\left(\beta^\top v_K(X_i,M_i)\right)- \frac{(1-T_i)\beta^\top v_K(X_i,M_i)}{1-\overline{T}}\right\}\ ,
\end{align*}
where $\overline{T}\triangleq \frac{1}{N}\sum_{i=1}^N T_i$. Therefore, we can define the estimators of $\theta_1'$ and $\delta_0'$ to be,
\begin{align*}
  \hat{\theta}_{1K}' & \ \triangleq \ \sum_{i=1}^N T_i\hat{\tilde{r}}_K(X_i,M_i)Y_i \ ,\\
  \hat{\delta}_0' & \ \triangleq \ \frac{\sum_{i=1}^N(1-T_i)Y_i}{\sum_{i=1}^N(1-T_i)}  \ .
 \end{align*}
Given these estimators, the proposed estimator for the NDEU is given by,
\begin{align*}
  \widehat{\textsf{NDEU}} & \ \triangleq \ \hat{\theta}_{1K}'- \hat{\delta}_0'
\end{align*}
The proposed estimator $\widehat{\textsf{NDEU}}$ extends the
estimators of average treatment effects on the treated studied by
\cite{hain:12} and \cite{chan2015lobally} to causal mediation
analysis.  The following theorem summarizes the asymptotic properties
of this estimator.
\begin{theorem}
 {\color{black} Under Assumptions 1-8 and 10}, $\widehat{\textsf{NDEU}}$ has
  the following properties:
\begin{enumerate}
\item $\widehat{\textsf{NDEU}} \stackrel{p}{\longrightarrow} \textsf{NDEU}$
\item
  $ \dps \sqrt{N}\left(\widehat{\textsf{NDEU}}- \textsf{NDEU}\right)
  \stackrel {d} {\longrightarrow} \mathcal{N}(0,
  V_{\textsf{\textsf{NDEU}}})$,
  where
  $V_{\textsf{\textsf{NDEU}}} =
  \mathbb{E}\left(S_{\textsf{\textsf{NDEU}}}^2\right)$
  attained the semi-parametric efficiency bound
  \citep{lendle2013identification}, where 
\begin{align*}
  S_{\textsf{NDEU}}\ = \ &\left\{\frac{\mathbf{1}\{T=1\}}{f_{T}(0)}\frac{f_{T\mid X,M}(0\mid X,M)}{f_{T\mid X,M}(1\mid X,M)}-\frac{\mathbf{1}\{T=0\}}{f_T(0)}\right\}\left\{Y-\mathbb{E}\left(Y\mid T,M,X\right)\right\} \\
                    &+\frac{\mathbf{1}\{T=0\}}{f_{T}(0)}\left\{\mathbb{E}\left(Y\mid T=1,M,X\right)-\mathbb{E}\left(Y\mid T=0,M,X\right)-\textsf{NDEU}\right\} \ .
\end{align*}
\end{enumerate}
\end{theorem}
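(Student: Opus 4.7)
The argument parallels the proof of Theorem~\ref{theorem:main} for $\hat{\theta}_{0K}$, with two simplifications and one new complication. The simplification is that the role of $\hat{q}_K(X_i)$ in the constraint of the original problem is now taken over by the uniform empirical weights $\{(1-T_i)/(N(1-\overline{T}))\}$, so no first-stage sieve calibration is needed for $q_0$. The complication is that $\overline{T}$ is a random quantity entering the objective $\widehat{H}_{1K}$, which must be tracked carefully.

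I would begin by introducing the population analog
\[
\tilde{r}_K^{*}(x,m) \triangleq \tfrac{1}{N}\rho'\!\bigl((\beta_{1K}^{*})^{\top} v_K(x,m)\bigr),
\qquad \beta_{1K}^{*} \triangleq \argmax_{\beta}\mathbb{E}\!\left[T\rho(\beta^{\top} v_K(X,M)) - \tfrac{(1-T)\beta^{\top} v_K(X,M)}{1-\mathbb{E}(T)}\right],
\]
and establish, by repeating the sieve-approximation argument of Lemma~\ref{lemma2}, the uniform bound $\sup_{(x,m)}|N\tilde{r}_0(x,m)-N\tilde{r}_K^{*}(x,m)|=O(K^{-s'/(2r)}\zeta(K))$, using that $\tilde{r}_0$ inherits the same smoothness as $r_0$ under Assumption~\ref{as:differentiable}. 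Then, following the same empirical-process and convexity/concavity arguments as in Lemma~\ref{lemma3}, I would prove $\|\hat{\beta}_{1K}-\beta_{1K}^{*}\|=O_p(\sqrt{K^5/N})$. The only new ingredient is to absorb the perturbation $\overline{T}-\mathbb{E}(T)=O_p(N^{-1/2})$ through a Taylor expansion of $\widehat{H}_{1K}$ in the $1/(1-\overline{T})$ factor; since the expansion yields a remainder of order $O_p(N^{-1/2}\|v_K\|_\infty)$, which is dominated by the $\sqrt{K^5/N}$ rate, it does not affect the conclusion.

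Consistency of $\hat{\theta}_{1K}'$ then follows from the uniform convergence $\sup_{(x,m)}|N\hat{\tilde{r}}_K-N\tilde{r}_0|=o_p(1)$ combined with the identification equation~\eqref{id:W}'s analog, $\mathbb{E}[T\,N\tilde{r}_0(X,M)\,Y]=\theta_1'$, and a law of large numbers. The scalar $\hat{\delta}_0'$ is an ordinary conditional empirical mean, hence consistent for $\delta_0'$, so $\widehat{\textsf{NDEU}}\stackrel{p}{\to}\textsf{NDEU}$. For asymptotic normality I would linearize:
\[
\sqrt{N}\bigl(\hat{\theta}_{1K}'-\theta_1'\bigr)
=\sqrt{N}\sum_{i=1}^N T_i\,\tilde{r}_0(X_i,M_i)Y_i - \sqrt{N}\,\theta_1'
+ \sqrt{N}\sum_{i=1}^N T_i\bigl[\hat{\tilde{r}}_K-\tilde{r}_0\bigr](X_i,M_i)Y_i ,
\]
and expand $\hat{\tilde{r}}_K$ around $\tilde{r}_K^{*}$ using $\rho''$; the first-order condition of $\widehat{H}_{1K}(\hat\beta_{1K})$ is what converts the weight-perturbation term into the influence contribution associated with the control group through $(1-T_i)/(1-\overline{T})$. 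This is exactly the step that, in the proof of Theorem~\ref{theorem:main}, substitutes $T\rho'(\hat\beta^\top v_K)$ by the alternative treated/control factor and produces the $Y-\mathbb{E}(Y\mid X,M,T=1)$ residual in $S_{\theta_0}$.

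Combining the expansion of $\hat{\theta}_{1K}'$ with the elementary expansion
\[
\sqrt{N}\bigl(\hat{\delta}_0'-\delta_0'\bigr)=\tfrac{1}{\sqrt{N}}\sum_{i=1}^N\tfrac{\mathbf{1}\{T_i=0\}}{f_T(0)}\bigl(Y_i-\delta_0'\bigr)+o_p(1),
\]
and checking that the contribution of $\overline{T}$ collected from both $\hat{\theta}_{1K}'$ and $\hat{\delta}_0'$ cancels, I would obtain the asymptotically linear representation with influence function exactly $S_{\textsf{NDEU}}$; then a CLT for triangular arrays (as in Theorem~\ref{theorem:main}) yields the normal limit, and semiparametric efficiency is immediate since the efficient bound $\mathbb{E}(S_{\textsf{NDEU}}^{2})$ of \citet{lendle2013identification} is attained. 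The main technical obstacle will be the linearization step: carefully showing that the bias from replacing $\hat{\tilde{r}}_K$ by $\tilde{r}_0$, after projection through the sieve first-order condition, matches the $\mathbb{E}(Y\mid T=1,M,X)-\mathbb{E}(Y\mid T=0,M,X)$ residuals of $S_{\textsf{NDEU}}$, and controlling the remainder at $o_p(N^{-1/2})$ despite the $O_p(\sqrt{K^5/N})$ rate on $\hat{\beta}_{1K}$, which requires Assumption~\ref{as:KN} in the form $\nu<1/17$ to kill the cubic sieve-remainder term.
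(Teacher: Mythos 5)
Your plan is correct and is essentially the paper's own route: introduce the population analogues $\beta_{1K}^{*}$ and $\tilde r_K^{*}$, rerun the approximation and rate arguments of Lemmas~\ref{lemma2}--\ref{lemma3} for the modified objective $\widehat H_{1K}$, and linearize through its first-order condition so that the calibration constraint produces the control-group correction terms of $S_{\textsf{NDEU}}$, the only new ingredient being the $\sqrt{N}$-consistent scalar $\overline{T}$ in place of the first-stage $\hat q_K$ (which, if anything, lets you sharpen the rate for $\hat\beta_{1K}$ from $O_p(\sqrt{K^5/N})$ toward $O_p(\sqrt{K/N})$ rather than hurting it). One bookkeeping caution: the $(\overline{T}-\mathbb{E}T)$ contributions from $\hat\theta'_{1K}$ and $\hat\delta'_0$ do not cancel outright — they enter with sizes proportional to $\theta_1'$ and $\delta_0'$ respectively, and their difference, of size $\textsf{NDEU}\,(\overline{T}-\mathbb{E}T)/(1-\mathbb{E}T)$, is exactly what converts the constant centering into the $-\textsf{NDEU}$ term multiplying $\mathbf{1}\{T=0\}/f_T(0)$ in $S_{\textsf{NDEU}}$, so when you carry out the check you should find a surviving term that completes, rather than vanishes from, the stated influence function.
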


\section{Empirical Applications}
\label{sec:dataanalysis}

In this section, we briefly describe the application of the proposed
methodology to two data sets, one regarding the evaluation of a job
training program, {\color{black}while another} concerning a psychological experiment
from political science.

\subsection{Evaluation of a job training program}

We reanalyze the data from the JOBSII intervention study for
unemployed job seekers \citep{vinokur1997mastery}.  JOBSII is a
randomized intervention study that {\color{black}investigates} the efficacy of a job
training program.  In the study, 1801 unemployed workers were randomly
assigned into two groups, with 1249 workers in the intervention arm
and 552 workers in the control arm.  The intervention group received
five job training workshops that had a specific focus on improving a
general sense of mastery of job seekers, which is a composite measure
including confidence, internal locus of control, and {\color{black}self-esteem}.  It
was hypothesized that the sense of mastery was a key mediator of a
successful reemployment.  The control group received a booklet but not
the workshop sessions.  \citet{vinokur1997mastery} used a linear
structural equation model and found that the enhanced sense of mastery
had a significant mediating effect on reemployment.  

To relax the linearity assumption, we reanalyze the JOBSII data using
the {\color{black}present} proposed nonparametric method.  We include the following baseline
covariates in the analysis: financial strain, depressive symptoms,
age, sex, race, education, marital status, previous income, and
previous occupation.  The function $u(X)$ is set to be linear in
covariates, and $v(X,M)$ is linear in covariates and the mediator.  We
find that the overall average treatment effect is 6.4\% with a 95\%
confidence interval of [0.9\%, 11.8\%].  The estimated average natural
indirect effect is 0.8\% {\color{black}(resp. [0.0\%, 1.6\%])}, while the estimated average
direct effect is {\color{black}5.6\% (resp. [0.1\%,
11.1\%])}.  Our result suggests that the improvement in the sense of
mastery contributed approximately 12.5\% to the total average
treatment effect.  This finding is stronger than the original study,
which reported the estimated proportion of this mediating effect to be
7.2\%.

\subsection{A political framing study}

Next, we reanalyze the data from an experiment conducted by
\cite{brad:vale:suha:08}, who studied the role of emotions for the
effect of news stories on the preferences of immigration policies.
The authors randomly assigned 351 White non-Latino adults into two
groups.  In one group, respondents were shown a {\color{black}newspaper} article
with a picture of a Hispanic immigrant.  In the other group, the same
article was shown except with a picture of an European immigrant.  The
original authors studied the mediation of the framing effect through
two psychological mechanisms: the perceived harm mechanism and the
anxiety mechanism.  They assumed that the two mediators are
independent but the assumption may not be plausible because an
increased level of perceived harm of immigration can cause more
anxiety about immigration policies.  Therefore, we apply the proposed
methodology described in Section~\ref{subsec:multiple}, which allows
for multiple mediators.  Specifically, we use the level of perceived
harm as $W$ while the level of anxiety is $M$. We include the same set
of covariates as in the original paper, which includes education, age,
income, and gender.

We find that the news article with a picture of a Latino immigrant
leads to a 0.430 percentage point increase (relative to the same
story with a picture of an European immigrant) in the opposition of
increased immigration on the five-point scale, with a 95\% confidence
interval of [0.199, 0.661].  The combined mediation effect from the
perceived harm and anxiety mechanisms is estimated to be 0.164
({\color{black}resp.} [0.001, 0.327]).  This implies that the two mediators together
account for 38\% of the total average treatment effect.  We further
decompose this combined mediation effect into two partial mediation
effects.  The partial indirect effect concerning the perceived harm
mechanism is estimated to be 0.115 ({\color{black}resp.} [$-$0.03, 0.263]) whereas the
estimated partial mediation effect for the anxiety mechanism is 0.049
({\color{black}resp.}  [$-$0.159, 0.257]).  This finding suggests that the anxiety mechanism
may play only a secondary role in the public opinion about immigration
policy.  Our results contradict with the original study, which
concluded that the anxiety plays an essential role by assuming {\color{black}that} the two mediators are causally independent. 

\section{Concluding remarks}
\label{sec:conclusion}

In this paper, we proposed a novel methodology for causal mediation
analysis.  We establish that the proposed estimator is {\color{black}fully} nonparametric
and globally semiparametric efficient.  This improves the existing
estimators, which rely upon parametric assumptions and is only locally
efficient.  Furthermore, we show how to consistently estimate the
asymptotic variance of this proposed estimator without requiring
additional nonparametric estimation.  Another advantage is the
availability of efficient and stable numerical algorithms that can be
used to compute the proposed estimator and its estimated variance.  We
show how to extend our methodology to a setting with multiple
mediators and the estimation of related causal quantities of interest.
While causal mediation analysis has gained popularity in a variety of
disciplines, applied researchers have largely relied upon parametric
methods.  We believe that our nonparametric estimator has a potential
to significantly improve the credibility of causal mediation analysis
in scientific research.

\bibliographystyle{imsart-nameyear}
\bibliography{NPmediation,my,imai}


\end{document}